\newcommand\dps{\displaystyle }
 \newtheorem{theorem}{Theorem}[section]
 \newtheorem{proposition}{Proposition}[section]
 \newtheorem{remark}{Remark}[section]
 \newtheorem{corollary}{Corollary}[section]
\def\R{\mathbb{R}}
\def\N{\mathbb{N}}
\def\Z{\mathbb{Z}}
\def\P{\mathbb{P}}
\def\Vt{\widetilde{V}}
\def\cI{\mathcal{I}}
\def\cR{\mathcal{R}}
\title{Periodic Schr\"odinger operators with local defects \\
and spectral pollution\thanks{This work was financially supported by the ANR grant MANIF.}}
\author{Eric Canc\`es\thanks{Universit\'e Paris Est, CERMICS, Projet MICMAC, Ecole des Ponts ParisTech - INRIA, 6 \& 8 avenue Blaise Pascal, 77455 Marne-la-Vall\`ee Cedex 2, France, ({\tt cances@cermics.enpc.fr}, {\tt ehrlachv@cermics.enpc.fr})} \and Virginie Ehrlacher \and Yvon Maday\thanks{Universit\'e Pierre et Marie Curie-Paris 6, UMR 7598, Laboratoire J.-L. Lions, Paris, F-75005 France, and Division of Applied Mathematics, Brown University,
182 George Street, Providence, RI 02912, USA, ({\tt maday@ann.jussieu.fr})}}
\begin{document}

\maketitle

\begin{abstract} 
This article deals with the numerical calculation of eigenvalues of perturbed periodic Schr\"odinger operators located in spectral gaps. Such operators are encountered in the modeling of the electronic structure of crystals with local defects, and of photonic crystals. The usual finite element Galerkin approximation is known to give rise to spectral pollution. In this article, we give a precise description of the corresponding spurious states. We then prove that the supercell model does not produce spectral pollution. Lastly, we extend results by Lewin and S\'er\'e on some no-pollution criteria. In particular, we prove that using approximate spectral projectors enables one to eliminate spectral pollution in a given spectral gap of the reference periodic Schr\"odinger operator.
\end{abstract}




\section{Introduction}

\noindent 
Periodic Schr\"odinger operators are encountered in the modeling of the electronic structure of crystals, as well as the study of photonic crystals. They are self-adjoint operators on $L^2(\R^d)$ with domain $H^2(\R^d)$ of the form
$$
H^0_{\rm per} = -\Delta + V_{\rm per},
$$
where $\Delta$ is the Laplace operator and $V_{\rm per}$ a ${\mathcal R}$-periodic function of $L^p_{\rm loc}(\R^d)$ ($\mathcal R$ being a periodic lattice of $\R^d$), with $p=2$ if $d \le 3$, $p > 2$ for $d=4$ and $p=d/2$ for $d \ge 5$.  

Such operators describe perfect crystals, by contrast with real crystals, in which the underlying periodic structure is perturbed by the presence of local or extended defects. In solid state physics, local defects are due to impurities, vacancies, or interstitial atoms, while extended defects correspond to dislocations or grain boundaries. The properties of the crystal can be dramatically affected by the presence of defects. In this article, we consider the case of a $d$-dimensional crystal with a single local defect, whose properties are encoded in the perturbed periodic Schr\"odinger operator 
\begin{equation} \label{eq:PPSO}
H = H^0_{\rm per} + W = -\Delta + V_{\rm per} + W, \qquad W \in L^\infty(\R^d), \qquad W(x) \mathop{\rightarrow}_{|x| \to \infty} 0.
\end{equation}
Note that we do not assume here that $W$ is compactly supported. This allows us in particular to handle the mean-field model considered in~\cite{CDL}. In the latter model, $d=3$ and the self-consistent potential $W$ generated by the defect is of the form $W = \rho \star |\cdot|^{-1}$ with $\rho \in L^2(\R^3) \cap {\mathcal C}$, $\mathcal C$ denoting the Coulomb space. Such potentials are continuous and vanish at infinity, but are not compactly supported in general. 

\medskip

\noindent
Computing the spectrum of the operator $H$ is a key step to understand the properties of the system. It is well known that the self-adjoint operator $H^0_{\rm per}$ is bounded from below on  $L^2(\R^d)$, and that the spectrum $\sigma(H^0_{\rm per})$ of $H^0_{\rm per}$ is purely absolutely continuous, and composed of a finite or countable number of closed intervals of $\R$~\cite{ReedSimon4}. The open interval laying between two such closed intervals is called a spectral gap. The multiplication operator $W$ being a compact perturbation of $H^0_{\rm per}$, it follows from Weyl's theorem~\cite{ReedSimon4} that $H$ is self-adjoint on $L^2(\R^d)$ with domain $H^2(\R^d)$, and that $H$ and $H^0_{\rm per}$ have the same essential spectrum:
$$
\sigma_{\rm ess}(H) = \sigma_{\rm ess}(H^0_{\rm per}) = \sigma(H^0_{\rm per}).
$$
Contrarily to $H^0_{\rm per}$, which has no discrete spectrum, $H$ may possess discrete eigenvalues. While the discrete eigenvalues located below the minimum of $\sigma_{\rm ess}(H)$ are easily obtained by standard variational approximations (in virtue of the Rayleigh-Ritz theorem~\cite{ReedSimon4}), it is more difficult to compute numerically the discrete eigenvalues located in spectral gaps, for spectral pollution may occur~\cite{BoultonLevitin}.

\medskip

In Section~\ref{sec:spectral_pollution}, we recall that the usual finite element Galerkin approximation may give rise to spectral pollution~\cite{BoultonLevitin}, and give a precise description of the corresponding spurious states. In Section~\ref{sec:supercell}, we show that the supercell model does not produce spectral pollution. Lastly, we extend in Section~\ref{sec:projector} results by Lewin and S\'er\'e~\cite{LewinSere} on some no-pollution criteria, which guarantee in particular that the numerical method introduced in \cite{CDL}, involving approximate spectral projectors, and is spectral pollution free.

\newpage

\section{Galerkin approximation}
\label{sec:spectral_pollution}

The discrete eigenvalues of $H$ and the associated eigenvectors can be obtained by solving the variational problem
$$
\left\{
\begin{array}{l}
\mbox{find }(\psi, \lambda)\in H^1(\R^d) \times \R \mbox{ such that}\\
\forall \phi \in H^1(\R^d), \; a(\psi, \phi) = \lambda \langle   \psi, \phi \rangle_{L^2},\\
\end{array}
\right .
$$
where $\langle \cdot, \cdot \rangle_{L^2}$ is the scalar product of $L^2(\R^d)$ and $a$ the bilinear form associated with $H$:
$$
a(\psi,\phi) = \int_{\R^d} \nabla \psi \cdot \nabla \phi + \int_{\R^d} (V_{\rm per}+W) \psi\phi.
$$
A sequence $(X_n)_{n\in\N}$ of finite dimensional subspaces of 
$H^1(\R^d)$ being given, we consider for all $n\in\N$, the self-adjoint operator $H|_{X_n}: X_n \to X_n$ defined by
$$
\forall (\psi_n, \phi_n)\in X_n\times X_n, \; \langle H|_{X_n} \psi_n, \phi_n \rangle_{L^2} = a(\psi_n, \phi_n).
$$
The so-called Galerkin method consists in approximating the spectrum of the operator $H$ by the eigenvalues of the discretized operators $H|_{X_n}$ for $n$ large enough, the latter being obtained by solving the variational problem
\begin{equation}\label{eq:discdef}
\left\{
\begin{array}{l}
\mbox{find }(\psi_n, \lambda_n)\in X_n \times \R \mbox{ such that}\\
\forall \phi_n \in X_n, \; a(\psi_n, \phi_n) = \lambda_n \langle   \psi_n, \phi_n \rangle_{L^2}.\\
\end{array}
\right .
\end{equation}
According to the Rayleigh-Ritz theorem \cite{ReedSimon4}, under the natural assumption that the sequence $(X_n)_{n\in\N}$ satisfies
\begin{equation}\label{eq:density}
\forall \phi\in H^1(\R^d), \; \inf_{\phi_n\in X_n} \|\phi-\phi_n\|_{H^1} \mathop{\longrightarrow}_{n\to\infty} 0,
\end{equation}
the Galerkin method allows to compute the eigenmodes of $H$ associated with the discrete eigenvalues located below the bottom of the essential spectrum. It is also known (see e.g.~\cite{Chatelin} for details) that, as $H$ is bounded below, (\ref{eq:density}) implies
\begin{equation} \label{eq:no_lack}
\sigma(H) \subset \liminf_{n \to \infty} \sigma\left(H|_{X_n}\right),
\end{equation}
where the right-hand side is the limit inferior of the sets $\sigma\left(H|_{X_n}\right)$, that is the set of the complex numbers $\lambda$ such that there exists a sequence $(\lambda_n)_{n\in\N}$, with $\lambda_n \in \sigma(H|_{X_n})$ for each $n \in \N$, converging toward $\lambda$. In particular, any discrete eigenvalue $\lambda$ of the operator $H$ is well-approximated by a sequence of eigenvalues of the discretized operators $H|_{X_n}$. On the other hand, (\ref{eq:density}) is not strong enough an assumption to prevent spectral pollution. Some sequences of eigenvalues of $\sigma(H|_{X_n})$ may indeed converge to a real number which does not belong to the spectrum of $H$:
\begin{equation} \label{eq:spectral_pollution}
\limsup_{n \to \infty} \sigma\left(H|_{X_n}\right) \nsubseteq \sigma(H) \quad \mbox{in general},
\end{equation}
where the limit superior of the sets $\sigma\left(H|_{X_n}\right)$ is the set of the complex numbers $\lambda$ such that there exists a subsequence $(\sigma(H|_{X_{n_k}}))_{k \in \N}$ of $(\sigma(H|_{X_n}))_{n \in \N}$ for which 
$$
\forall k \in \N, \quad \exists \lambda_{n_k} \in \sigma(H|_{X_{n_k}})  \quad \mbox{and} \quad \lim_{k \to \infty} \lambda_{n_k} = \lambda. 
$$

Spectral pollution has been observed in many situations in physics and mechanics, and this phenomenon is now well-documented (see e.g. \cite{review_spectral_pollution} and references therein). In~\cite{BoultonLevitin}, Boulton and Levitin report numerical simulations on perturbed periodic Schr\"odinger operators showing that ``{\em the natural approach of truncating $\R^d$ to a large compact domain and applying the projection method to the corresponding Dirichlet problem is prone to spectral pollution}''. Truncating $\R^d$ indeed seems reasonable since it is known that the bound states of $H$ decay exponentially fast at infinity~\cite{MantoiuPurice}. The following result provides details on the behavior of the spurious modes when the approximation space is constructed using the finite element method.

\medskip

\begin{proposition} \label{prop:Galerkin} Let $({\mathcal T}_n^\infty)_{n \in \N}$ be a sequence of uniformly regular meshes of $\R^d$, invariant with respect to 
the translations of the lattice ${\mathcal R}$, and such that $h_n:=\max_{K \in {\mathcal T}_n^\infty} \mbox{\rm diam}(K) \mathop{\rightarrow}_{n \to \infty} 0$. 
Let $(\Omega_n)_{n \in \N}$ be an increasing sequence of closed convex sets of $\R^d$ converging to $\R^d$, 
${\mathcal T}_n := \left\{K \in {\mathcal T}_n^\infty \, | \, K \subset \Omega_n \right\}$ and $X_n$ the finite-dimensional approximation space of $H^1_0(\Omega_n) \hookrightarrow H^1(\R^d)$ obtained with ${\mathcal T}_n$ and $\P_m$ finite elements ($m \in \N^\ast$). Let $\lambda \in  \limsup_{n \to \infty} \sigma\left(H|_{X_n}\right) \setminus \sigma(H)$ and $(\psi_{n_k},\lambda_{n_k}) \in X_{n_k} \times \R$ be such that $H|_{X_{n_k}}\psi_{n_k} = \lambda_{n_k}\psi_{n_k}$,  $\|\psi_{n_k}\|_{L^2} = 1$ and $\lim_{k \to \infty}\lambda_{n_k} = \lambda$. Then, the sequence $(\psi_{n_k})_{k \in \N}$, considered as a sequence of functions of $H^1(\R^d)$, converges to $0$ weakly in $H^1(\R^d)$ and strongly in $L^q_{\rm loc}(\R^d)$,  with $q=\infty$ if $d=1$, $q < \infty$ if $d=2$ and $q < 2d/(d-2)$ if $d \ge 3$, in the sense that
$$
\forall K \subset \R^d, \quad K \mbox{ compact}, \quad \int_K|\psi_{n_k}|^q \mathop{\longrightarrow}_{k \to \infty} 0,
$$
and it holds  
\begin{equation}\label{eq:concentration}
\forall \epsilon > 0, \quad \exists R > 0 \quad \mbox{s. t.} \quad \liminf_{k \to \infty} \int_{\partial \Omega_{n_k} + B(0,R)} |\psi_{n_k}|^2 \ge 1-\epsilon.
\end{equation}
\end{proposition}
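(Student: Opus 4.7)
The plan is to first show that $\psi_{n_k}$ converges weakly to $0$ in $H^1(\R^d)$ --- from which the strong $L^q_{\rm loc}$ convergence follows by Rellich--Kondrachov --- and then to prove the concentration estimate~(\ref{eq:concentration}) by a contradiction argument based on lattice translations.

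For the weak convergence, I first establish that $(\psi_{n_k})_k$ is bounded in $H^1(\R^d)$: the identity $a(\psi_{n_k},\psi_{n_k})=\lambda_{n_k}$ combined with the infinitesimal relative form-boundedness of $V_{\rm per}\in L^p_{\rm loc}(\R^d)$ with respect to $-\Delta$ (which holds under the assumed conditions on $p$) and $W\in L^\infty$ gives $\|\nabla\psi_{n_k}\|_{L^2}^2\le 2(|\lambda_{n_k}|+C)$. Extract a subsequence (not relabeled) with $\psi_{n_k}\rightharpoonup\psi_\infty$ weakly in $H^1(\R^d)$ and strongly in $L^q_{\rm loc}(\R^d)$ for the announced exponents. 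To identify $\psi_\infty$, pick $\phi\in C_c^\infty(\R^d)$; since $\Omega_{n_k}\uparrow\R^d$, $\mbox{supp}\,\phi\subset\Omega_{n_k}$ for $k$ large, and by~(\ref{eq:density}) together with standard $\P_m$-interpolation estimates on uniformly regular meshes one can find $\phi_{n_k}\in X_{n_k}$ with $\phi_{n_k}\to\phi$ strongly in $H^1(\R^d)$. Passing to the limit in $a(\psi_{n_k},\phi_{n_k})=\lambda_{n_k}\langle\psi_{n_k},\phi_{n_k}\rangle_{L^2}$ (using weak--strong duality in $H^1$ for the gradient term, local $L^{p'}$/$L^p$ pairing for the $V_{\rm per}$ term on the compact support of $\phi$, and the boundedness of $W$ for the $W$ term) yields $a(\psi_\infty,\phi)=\lambda\langle\psi_\infty,\phi\rangle_{L^2}$. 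Thus $\psi_\infty\in H^1(\R^d)$ is a form eigenfunction of $H$ for $\lambda$; since $\lambda\notin\sigma(H)$, $\psi_\infty=0$. This holding for every weak subsequential limit, the entire sequence $\psi_{n_k}\rightharpoonup 0$ in $H^1(\R^d)$ and $\to 0$ in $L^q_{\rm loc}(\R^d)$.

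For the concentration estimate, argue by contradiction: assume that there exist $\epsilon_0>0$ and, by a diagonal extraction, $R_k\to\infty$ such that $\int_{\Omega_{n_k}\setminus(\partial\Omega_{n_k}+B(0,R_k))}|\psi_{n_k}|^2\ge\epsilon_0$. Multiplying $\psi_{n_k}$ by a smooth cutoff supported at distance $\ge R_k/2$ from $\partial\Omega_{n_k}$ produces a sequence bounded in $H^1$ of $L^2$-norm $\ge\sqrt{\epsilon_0}$ that converges to $0$ in $L^2_{\rm loc}$; its mass must therefore escape to infinity, yielding $x_{n_k}\in\R^d$ with $|x_{n_k}|\to\infty$, $d(x_{n_k},\partial\Omega_{n_k})\to\infty$, and $\int_{B(x_{n_k},r_0)}|\psi_{n_k}|^2\ge\alpha>0$ for some fixed $r_0,\alpha>0$. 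Pick $y_{n_k}\in\mathcal{R}$ at bounded distance from $x_{n_k}$ (possible because $\mathcal{R}$ is a lattice) and set $\widetilde\psi_{n_k}(\cdot):=\psi_{n_k}(\cdot+y_{n_k})$. The $\mathcal{R}$-periodicity of $V_{\rm per}$ and the $\mathcal{R}$-invariance of $\mathcal{T}_n^\infty$ imply that $\widetilde\psi_{n_k}$ is a Galerkin eigenfunction of $-\Delta+V_{\rm per}+W(\cdot+y_{n_k})$ on the translated finite element space contained in $H^1_0(\Omega_{n_k}-y_{n_k})$, with eigenvalue $\lambda_{n_k}$. Since $d(x_{n_k},\partial\Omega_{n_k})\to\infty$, $\Omega_{n_k}-y_{n_k}$ exhausts $\R^d$, and $W(\cdot+y_{n_k})\to 0$ locally as $|y_{n_k}|\to\infty$. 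Extracting a weak-$H^1$ limit $\widetilde\psi_\infty$ and repeating the test-function argument of the previous paragraph yields $(H^0_{\rm per}-\lambda)\widetilde\psi_\infty=0$ weakly on $\R^d$. Because $\lambda\notin\sigma(H)\supset\sigma_{\rm ess}(H)=\sigma(H^0_{\rm per})$, $\widetilde\psi_\infty=0$, which contradicts the lower bound $\int_{B(0,r_0+c)}|\widetilde\psi_{n_k}|^2\ge\alpha$ (with $c$ a bound for $|x_{n_k}-y_{n_k}|$) together with the strong $L^2_{\rm loc}$ convergence of $\widetilde\psi_{n_k}$ to $0$.

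The hard part is the concentration step. The first two stages are essentially textbook Rayleigh--Ritz and weak-compactness arguments, whereas the last requires the lattice-translation reduction to transfer the putative bulk concentration back to a fixed neighborhood of the origin: the $\mathcal{R}$-invariance of both $V_{\rm per}$ and the mesh $\mathcal{T}_n^\infty$, together with the fact that $\Omega_n$ is increasing and exhausts $\R^d$, is exactly what guarantees that the translated sequence lives in a Galerkin space of the same type as the original and that its Dirichlet boundary is pushed to infinity, thereby producing a bona fide periodic Galerkin eigenfunction in the limit.
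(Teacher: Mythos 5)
Your first step (boundedness in $H^1$, identification of the weak limit as an eigenfunction, hence zero) is essentially identical to the paper's argument and is fine. The problem is in the concentration step, where your proof has a genuine gap at the sentence ``its mass must therefore escape to infinity, yielding $x_{n_k}\in\R^d$ with \dots $\int_{B(x_{n_k},r_0)}|\psi_{n_k}|^2\ge\alpha>0$.'' This is the \emph{non-vanishing} alternative of concentration--compactness, and it does not follow from $H^1$-boundedness plus a lower bound on the $L^2$-norm: a sequence can keep unit $L^2$ mass, stay bounded in $H^1$, and still satisfy $\sup_{x}\int_{B(x,r_0)}|u_k|^2\to 0$ for every fixed $r_0$ (it simply spreads out). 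In that ``vanishing'' scenario your translated sequence $\widetilde\psi_{n_k}$ also converges to $0$ in $L^2_{\rm loc}$, the weak limit $\widetilde\psi_\infty$ is $0$, and no contradiction is obtained. Ruling out a spread-out spurious bulk mode is precisely the content of what you are trying to prove, so the argument is circular at this point. Note also that vanishing cannot be excluded by the usual trick of killing the potential term, since $V_{\rm per}$ is periodic rather than decaying, so $\int V_{\rm per}|u_k|^2$ does not vanish along a vanishing sequence.

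The paper closes exactly this hole by a different mechanism: after cutting off and re-interpolating, it shows that $\widetilde\psi_{n_k}$ is an approximate quasimode of the periodic operator $H^0_{\rm per}$ \emph{restricted to the full-space finite element space} $X_{n_k}^\infty$, and then uses the $\mathcal R$-invariance of the mesh to Bloch-decompose that discretized periodic operator. For each Bloch fiber the discrete eigenvalues $\epsilon_{n,j,q}$ stay (asymptotically) at distance $\zeta=\mathrm{dist}(\lambda,\sigma(H^0_{\rm per}))>0$ from $\lambda$, so testing against a suitable sign-flipped combination of discrete Bloch eigenvectors yields $(a^0-\lambda_{n_k})(\widetilde\psi_{n_k},\phi_{n_k}^\infty)\ge\zeta\epsilon+o(1)$, contradicting the quasimode estimate. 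This spectral statement about the full-space discretization of $H^0_{\rm per}$ is the key lemma your proof is missing; it handles the vanishing case uniformly and is where the mesh periodicity is really used. Your translation argument would only recover the non-vanishing case, so as written the proof is incomplete.
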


\medskip

The latter result shows that the mass of the spurious states concentrates on the boundary of the simulation domain $\Omega_{n_k}$. 

\medskip

This phenomenon is clearly observed on the two dimensional numerical simulations reported below, which have been performed with the finite element software FreeFem++~\cite{FreeFEM}, with $V_{\rm per}(x,y)=\cos(x) + 3\sin(2(x+y)+1)$ and $W(x,y)=-(x+2)^2(2y-1)^2\exp(-(x^2+y^2))$. We have checked numerically, using the Bloch decomposition method, that there is a gap $(\alpha,\beta)$, with $\alpha \simeq -0.341$ and $\beta \simeq 0.016$, between the first and second bands of $H^0_{\rm per}=-\Delta+V_{\rm per}$. We have also checked numerically, using the pollution free supercell method (see Theorem~\ref{Th:supercell} below), that $H=H^0_{\rm per}+W$ has exactly one eigenvalue in the gap $(\alpha,\beta)$ approximatively equal to $-0.105$. Our simulations have been performed with a sequence of $\P_1$-finite element approximation spaces $(X_n)_{40 \le n \le 100}$, where for each $40 \le n \le 100$,
\begin{itemize}
\item $\dps \Omega_n= \left[-4\pi \frac{m_n}n,4\pi \frac{m_n}n\right]$, with $\dps m_n=\left[n\left(\frac{n-40}{20}+5\right)\right]$; 
\item ${\cal T}_n^\infty$ is a uniform $2\pi\Z^2$-periodic mesh of $\R^2$ consisting of $2n^2$ isometrical isoceles rectangular triangles per unit cell.
\end{itemize}
The spectra of $H|_{X_n}$ in the gap $(\alpha,\beta)$ for $40 \le n \le 100$ are displayed on Fig.~1. We clearly see that all these operators have an eigenvalue close to $-0.1$, which is an approximation of a true eigenvalue of $H$. The corresponding eigenfunction for $n=88$ (blue circle on Fig.~1) is displayed on Fig.~\ref{fig:eigenfunction} (top); as expected, it is localized in the vicinity of the defect. On the other hand, most of these discretized operators have several eigenvalues in the range $(\alpha,\beta)$, which cannot be associated with an eigenvalue of $H$, and can be interpreted as spurious modes. The eigenfunction of $H|_{X_n}$ close to $-0.290$, obtained for $n=88$ (blue square on Fig.~1), is displayed on Fig.~\ref{fig:eigenfunction} (bottom); in agreement with the analysis carried out in Proposition~\ref{prop:Galerkin}, it is localized in the vicinity of the boundary of the computational domain.

\medskip

\begin{figure}[h]
\centering
\label{fig:spectrum}
\includegraphics[height=8truecm]{./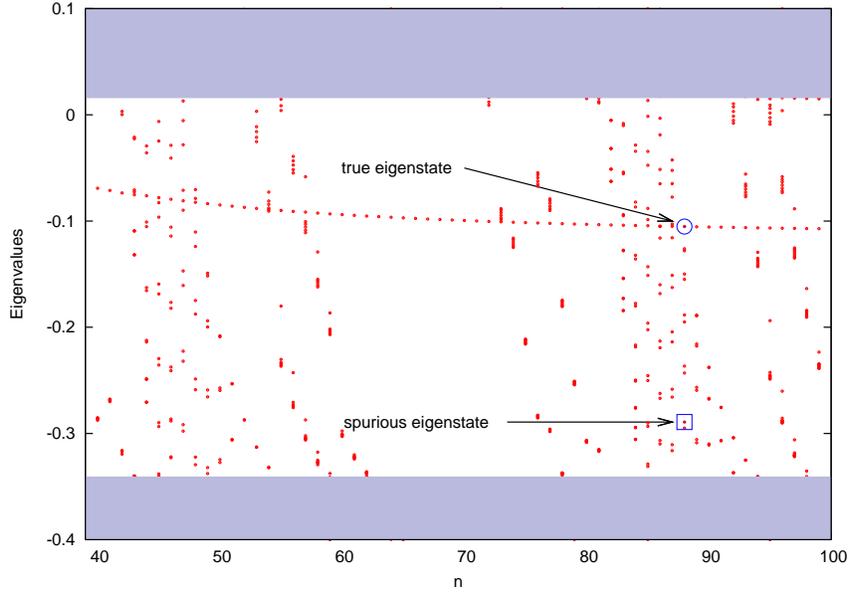}
\caption{Spectrum of $H|_{X_n}$ in the gap $(\alpha,\beta)$ for $40 \le n \le 100$}
\end{figure}

\medskip

\begin{figure}[h]
\centering
\label{fig:eigenfunction}
\begin{tabular}{c}
\includegraphics[height=6truecm]{./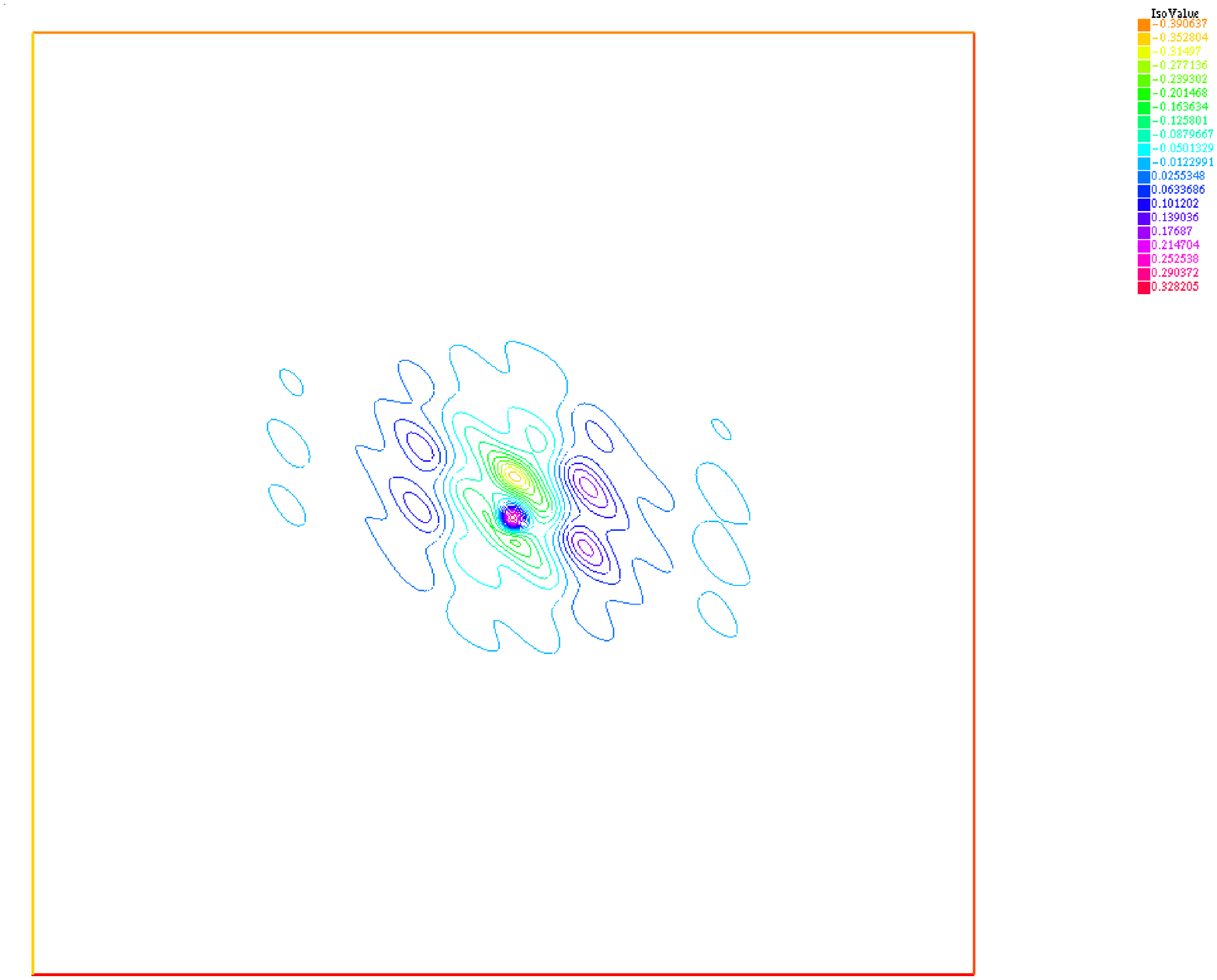} \\
\includegraphics[height=6truecm]{./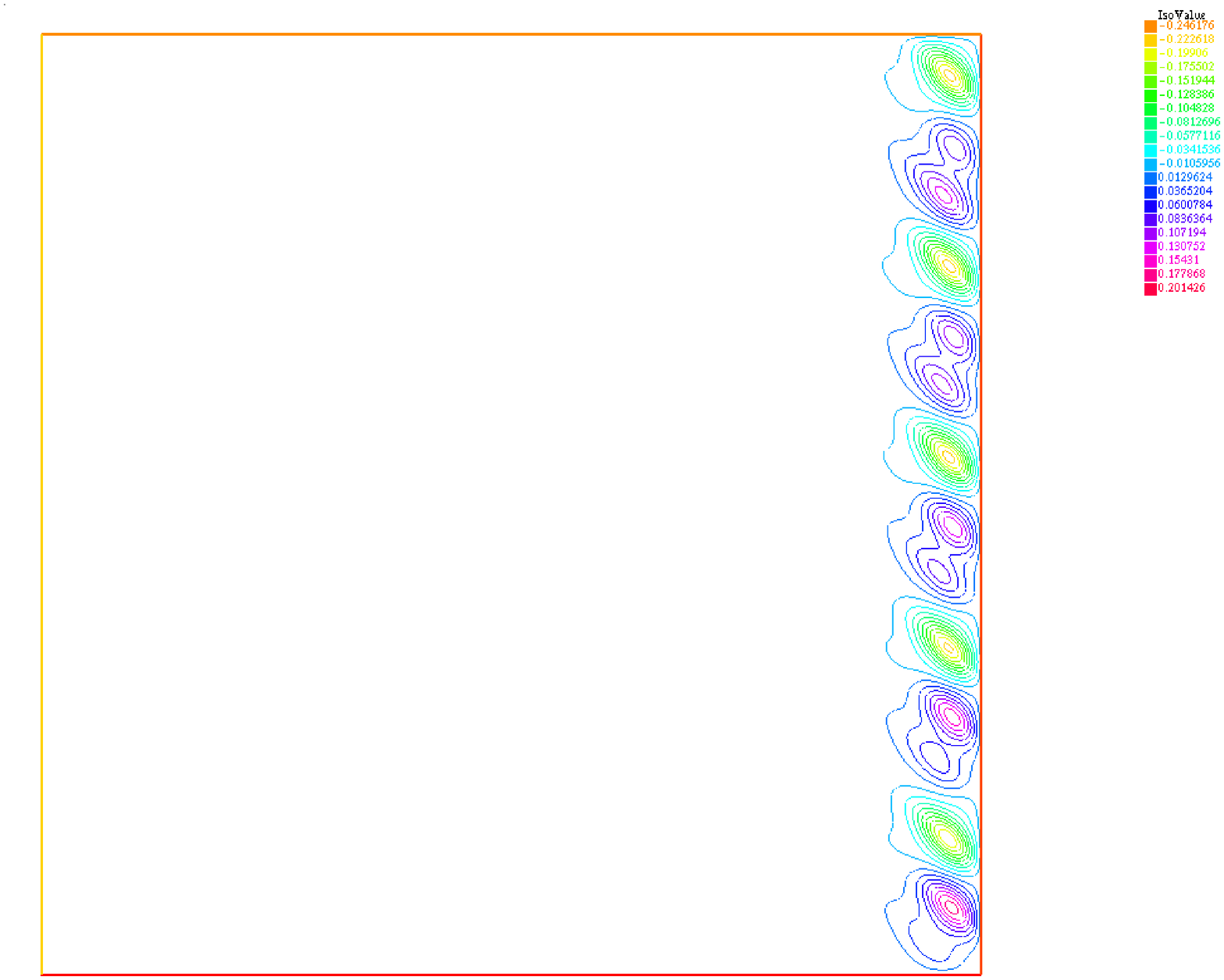}
\end{tabular}
\caption{A true eigenfunction, localized close to the defect (top), and a ``spurious'' eigenfunction, localized close to the boundary (bottom).}
\end{figure}

\medskip

\begin{remark}
Using the results in \cite{half-line}, it is possible to characterize the spurious states generated by finite element discretizations of one-dimensional perturbed Schr\"odinger operators: for ${\mathcal R}=b\Z$ and $\Omega_n = [-(n+t)b,(n+t)b]$, the spurious eigenvalues are the discrete eigenvalues in $[\min(\sigma(H^0_{\rm per})),+\infty) \setminus \sigma(H)$ of the operators $H^+(t)$ and $H^-(t)$ on $L^2(\R_+)$ with domains $H^2(\R_+) \cap H^1_0(\R_+)$, respectively defined by $H^\pm(t) = -\dps \frac{d^2}{dx^2} + V_{\rm per}(x\pm tb)$. Besides, the spurious eigenvectors of $H|_{X_n}$ converge (in some sense, and up to translation) to the discrete eigenvectors of $H^\pm(t)$. As 
$$
\left( \bigcup_{t \in [0,b)} \sigma(H^\pm(t)) \right) \cap  [\min(\sigma(H^0_{\rm per})),+\infty) = [\min(\sigma(H^0_{\rm per})),+\infty), 
$$
any $\lambda \in [\min(\sigma(H^0_{\rm per})),+\infty) \setminus \sigma(H)$ is a spurious eigenvalue, in the sense that there exists an increasing sequence $(\Omega_n)_{n \in \N}$ of closed intervals of $\R$ converging to $\R$ such that 
$$
\lambda \in \liminf_{n \to \infty} \sigma(H|_{X_n}).
$$
We refer to~\cite{these} for a proof and a numerical illustration of this result. The proof of similar results for $d \ge 2$ is work in progress.
\end{remark}

\medskip

\begin{proof}[Proof of Proposition~\ref{prop:Galerkin}] We first notice that, since $H = -\frac 12 \Delta + \frac 12 \left( -\Delta + 2V_{\rm per} \right) + W$, with $W$ bounded in $L^\infty(\R^d)$ and $-\Delta +2V_{\rm per}$ bounded below, there exists a constant $C \in \R_+$ such that
\begin{equation}\label{eq:lower_bound_a}
\forall \psi \in H^1(\R^d), \quad a(\psi,\psi) \ge \frac 12 \|\nabla \psi\|_{L^2}^2 - C \|\psi\|_{L^2}^2.
\end{equation}
As  
$$
\forall k \in \N, \quad \|\psi_{n_k}\|_{L^2} = 1 \quad \mbox{and} \quad a(\psi_{n_k},\psi_{n_k})= \lambda_{n_k} \mathop{\longrightarrow}_{k \to \infty} \lambda,
$$
we infer from (\ref{eq:lower_bound_a}) that the sequence $(\psi_{n_k})_{k \in \N}$ is bounded in $H^1(\R^d)$. It therefore converges, up to extraction, to some function $\phi \in H^1(\R^d)$, weakly in $H^1(\R^d)$, and strongly in $L^q_{\rm loc}(\R^d)$ with $q=\infty$ if $d=1$, $q < \infty$ if $d=2$ and $q < 2d/(d-2)$ if $d \ge 3$. It is easy to deduce from (\ref{eq:density}) and the continuity of $a$ on $H^1(\R^d)\times H^1(\R^d)$ that $\phi$ satisfies $H\phi = \lambda \phi$ and therefore that $\phi=0$ since $\lambda \notin \sigma(H)$ by assumption. Consequently, the whole sequence $(\psi_{n_k})_{k \in \N}$ converges to zero weakly in $H^1(\R^d)$ and strongly in $L^q_{\rm loc}(\R^d)$.

\medskip

Let us now prove (\ref{eq:concentration}) by contradiction. Assume that there exists $\epsilon > 0$ such that 
$$
\forall R > 0, \quad \liminf_{k \to \infty} \int_{\partial \Omega_{n_k} + B(0,R)}|\psi_{n_k}|^2 < 1-\epsilon.
$$
As $\|\psi_{n_k}\|_{L^2}=1$ for all $k$, the above inequality also reads
$$
\forall R > 0, \quad \limsup_{k \to \infty} \int_{\Omega_{n_k}^R}|\psi_{n_k}|^2 > \epsilon,
$$
where $\Omega_{n_k}^R = \left\{ x \in \Omega_{n_k} \, | \, d(x,\partial \Omega_{n_k}) \ge R \right\}$. We could then extract from $(\psi_{n_k})_{k \in \N}$ a subsequence, still denoted by $(\psi_{n_k})_{k \in \N}$, such that there exists an increasing sequence $(R_{n_k})_{k \in \N}$ of real numbers going to infinity such that
$$
\forall k \in \N, \quad \int_{\Omega_{n_k}^{R_{n_k}}}|\psi_{n_k}|^2 \ge \epsilon.
$$  
Let us denote by 
$$
C^0({\mathcal T}_n^\infty) = \left\{ v \in C^0(\R^d) \; | \; \forall K \in {\mathcal T}_n^\infty, \; v|_K \in \P_m \right\}
$$
the set of continuous functions built from ${\mathcal T}_n^\infty$ and $\P_m$-finite elements, and by
$$
X_n^\infty = C^0({\mathcal T}_n^\infty) \cap H^1(\R^d).
$$ 
The space $X_n^\infty$ is an (infinite dimensional) closed subspace of $H^1(\R^d)$. Obviously $X_n \hookrightarrow X_n^\infty$. We then introduce a sequence $(\chi_{n_k})_{k \in \N}$ of functions of $C^\infty_{\rm c}(\R^d)$ such that for all $k \in \N$,
$$
\mbox{\rm Supp}(\chi_{n_k}) \subset \Omega_{n_k}, \; \chi_k \equiv 1 \mbox{ on } \Omega_{n_k}^{R_{n_k}}, \; \mbox{and} \; \forall |\alpha| \le (m+1), \; \|\partial^\alpha \chi_{n_k}\|_{L^\infty} \le C R_{n_k}^{-|\alpha|},
$$
for a constant $C \in \R_+$ independent of $k$. Let $\widetilde \psi_{n_k} = P_{n_k} ( \chi_{n_k}\psi_{n_k})$, where $P_{n_k}$ is the interpolation projector on $X_{n_k}$. For all $k \in \N$, $\|\widetilde \psi_{n_k}\|_{L^2} \ge \epsilon^{1/2}$ and for all $\phi_{n_k}^\infty \in X_{n_k}^\infty$,
\begin{eqnarray*}
(a-\lambda_{n_k})(\widetilde \psi_{n_k},\phi_{n_k}^\infty) &=& (a-\lambda_{n_k})(\chi_{n_k}\psi_{n_k},\phi_{n_k}^\infty) \\ && - (a-\lambda_{n_k})(\chi_{n_k}\psi_{n_k}-P_{n_k}(\chi_{n_k} \psi_{n_k}),\phi_{n_k}^\infty) \\
&=& (a-\lambda_{n_k})(\psi_{n_k},\chi_{n_k}\phi_{n_k}^\infty) \\ && - (a-\lambda_{n_k})(\chi_{n_k}\psi_{n_k}-P_{n_k}(\chi_{n_k} \psi_{n_k}),\phi_{n_k}^\infty) \\
&& - \int_{\R^d} (\Delta\chi_{n_k} \psi_{n_k}\phi_{n_k}^\infty+2 \phi_{n_k}^\infty \nabla\chi_{n_k}\cdot\nabla\psi_{n_k}) \\
&=& (a-\lambda_{n_k})(\psi_{n_k},\chi_{n_k}\phi_{n_k}^\infty-P_{n_k}(\chi_{n_k} \phi_{n_k}^\infty)) \\
&& - (a-\lambda_{n_k})(\chi_{n_k}\psi_{n_k}-P_{n_k}(\chi_{n_k} \psi_{n_k}),\phi_{n_k}^\infty)    \\
&& - \int_{\R^d} (\Delta\chi_{n_k} \psi_{n_k}\phi_{n_k}^\infty+2 \phi_{n_k}^\infty \nabla\chi_{n_k}\cdot\nabla\psi_{n_k}),
\end{eqnarray*}
where we have used that $(a-\lambda_{n_k})(\psi_{n_k},P_{n_k}(\chi_{n_k} \phi_{n_k}^\infty))=0$ since $P_{n_k}(\chi_{n_k} \phi_{n_k}^\infty) \in X_{n_k}$. Denoting by 
$$
a^0(\psi,\phi) = \int_{\R^d}\nabla\psi\cdot\nabla\phi + \int_{\R^d} V_{\rm per}\psi\phi,
$$
we end up with
\begin{eqnarray}
(a^0-\lambda_{n_k})(\widetilde \psi_{n_k},\phi_{n_k}^\infty) &=& (a-\lambda_{n_k})(\psi_{n_k},\chi_{n_k}\phi_{n_k}^\infty-P_{n_k}(\chi_{n_k} \phi_{n_k}^\infty))\nonumber \\
&& - (a-\lambda_{n_k})(\chi_{n_k}\psi_{n_k}-P_{n_k}(\chi_{n_k} \psi_{n_k}),\phi_{n_k}^\infty)   \nonumber \\
&& - \int_{\R^d} (\Delta\chi_{n_k} \psi_{n_k}\phi_{n_k}^\infty+2 \phi_{n_k}^\infty \nabla\chi_{n_k}\cdot\nabla\psi_{n_k})\nonumber \\
&& - \int_{\R^d} W \widetilde \psi_{n_k} \phi_{n_k}^\infty. \label{eq:aml}
\end{eqnarray}

\medskip

\noindent
Besides, for $h_{n_k}\le 1$,
\begin{equation}\label{eq:interpolation}
\forall \phi_{n_k}^\infty \in X_{n_k}^\infty, \quad \|\chi_{n_k}\phi_{n_k}^\infty-P_{n_k}(\chi_{n_k}\phi_{n_k}^\infty)\|_{H^1} \le C h_{n_k}R_{n_k}^{-1} \|\phi_{n_k}^\infty\|_{H^1},
\end{equation}
for some constant $C$ independent of $k$ and $\phi_{n_k}^\infty$. To prove the above inequality, we notice that for all $K \in {\mathcal T}_{n_k}$, $(\chi_{n_k}\phi_{n_k}^\infty)|_K \in C^\infty(K)$, and $\partial^\beta\phi_{n_k}^\infty|_K=0$ if $|\beta|=m+1$, so that
\begin{eqnarray*}
\|\chi_{n_k}\phi_{n_k}^\infty-P_{n_k}(\chi_{n_k}\phi_{n_k}^\infty)\|_{H^1}^2 &=& 
\sum_{K \in {\mathcal T}_{n_k}} \| (\chi_{n_k}\phi_{n_k}^\infty)|_K-(P_{n_k}(\chi_{n_k}\phi_{n_k}^\infty))|_K\|_{H^1(K)}^2 \\ 
&\le& C h_{n_k}^{2m} \sum_{K \in {\mathcal T}_{n_k}} \max_{|\alpha|=m+1} \| \partial^\alpha(\chi_{n_k}\phi_{n_k}^\infty)|_K\|_{L^2(K)}^2 \\ 
&\le& C  h_{n_k}^{2m} \sum_{K \in {\mathcal T}_{n_k}} \max_{|\alpha|=m+1} \sum_{\beta \le \alpha} \| \partial^{\alpha-\beta}  \chi_{n_k}\|_{L^\infty}^2 \|\partial^\beta \phi_{n_k}^\infty|_K\|_{L^2(K)}^2  \\ 
&\le& C  h_{n_k}^{2m} R_{n_k}^{-2} \sum_{K \in {\mathcal T}_{n_k}} \max_{|\beta| \le m}  \|\partial^\beta \phi_{n_k}^\infty|_K\|_{L^2(K)}^2 \\
&\le& C  h_{n_k}^{2m} R_{n_k}^{-2} \sum_{K \in {\mathcal T}_{n_k}} (1+h_{n_k}^{-2(m-1)}) \|\phi_{n_k}^\infty|_K\|_{H^1(K)}^2 \\
&\le& Ch_{n_k}^2R_{n_k}^{-2} \|\phi_{n_k}^\infty\|_{H^1}^2,
\end{eqnarray*}
where we have used inverse inequalities and the assumption that the sequence of meshes $({\mathcal T}_n^\infty)_{n \in \N}$ is uniformly regular, to obtain the last but one inequality.

\medskip

\noindent
Using the boundedness of $(\psi_{n_k})_{k \in \N}$ in $H^1(\R^d)$, the properties of $\chi_{n_k}$ and $W$, and the fact that $(\psi_{n_k})_{k \in \N}$ strongly converges to $0$ in $L^2_{\rm loc}(\R^d)$, we deduce from (\ref{eq:aml}) and (\ref{eq:interpolation}) that
$$
\forall \phi_{n_k}^\infty \in X_{n_k}^\infty, \quad \left| (a^0-\lambda_{n_k})(\widetilde \psi_{n_k},\phi_{n_k}^\infty) \right| \le \eta_{n_k} \|\phi_{n_k}^\infty\|_{H^1},
$$
where the sequence of positive real numbers $(\eta_{n_k})_{k \in \N}$ goes to zero when $k$ goes to infinity.

\medskip

\noindent
We can now use Bloch theory (see e.g.~\cite{ReedSimon4}) and expand the functions of $X_{n_k}^\infty$ as 
$$
\phi_{n_k}^\infty(x) = \fint_{\Gamma^\ast} (\phi_{n_k}^\infty)_q(x) \, dq,
$$
where $\Gamma^\ast$ is the first Brillouin zone of the perfect crystal, and where for all $q \in \Gamma^\ast$,
$$
(\phi_{n_k}^\infty)_q (x) = \sum_{R \in {\mathcal R}} \phi_{n_k}^\infty(x+R) e^{-i q \cdot R}. 
$$
For each $q \in \Gamma^\ast$, the function $(\phi_{n_k}^\infty)_q$ belongs to the complex Hilbert space 
$$
L^2_q(\Gamma):= \left\{ v(x) e^{iq\cdot x}, \; v \in L^2_{\rm loc}(\R^d), \; v \mbox{ ${\mathcal R}$-periodic} \right\},
$$ 
where $\Gamma$ denotes the Wigner-Seitz cell of the lattice ${\mathcal R}$
(notice that the functions $(\phi_{n_k}^\infty)_q$ are complex-valued). Recall that if ${\mathcal R}=b \Z^d$ (cubic lattice of parameter $b > 0$), then $\Gamma=(-b/2,b/2]^d$ and $\Gamma^\ast = (-\pi/b,\pi/b]^d$. The mesh ${\mathcal T}_{n_k}^\infty$ being invariant with respect to the translations of the lattice ${\mathcal R}$, it holds in fact
$$
(\phi_{n_k}^\infty)_q \in C^0({\mathcal T}_{n_k}^\infty) \cap L^2_q(\Gamma).
$$
We thus have for all $\phi_{n_k}^\infty \in X_{n_k}^\infty$,
$$
(a^0-\lambda_{n_k})(\widetilde \psi_{n_k},\phi_{n_k}^\infty) 
= \fint_{\Gamma^\ast} (a^0_q-\lambda_n)((\widetilde \psi_{n_k})_q,(\phi_{n_k}^\infty)_q) \, dq,
$$
where
\begin{equation}\label{eq:a0q}
a^0_q(\psi_q,\phi_q) = \int_{\Gamma} \nabla\psi_q^\ast \cdot \nabla\phi_q + \int_\Gamma V_{\rm per} \psi_q^\ast \phi_q.
\end{equation}
Let $(\epsilon_{n,l,q},e_{n,l,q})_{1 \le l \le N_{n}}$, $\epsilon_{n,1,q} \le \epsilon_{n,2,q}  \le \cdots \le \epsilon_{n,N_n,q}$, be an $L^2_q(\Gamma)$-orthonormal basis of eigenmodes of $a^0_q$ in $C^0({\mathcal T}_n^\infty) \cap L^2_q(\Gamma)$. Expanding $(\widetilde \psi_{n_k})_q$ in the basis $(e_{n_k,l,q})_{1 \le l \le N_{n_k} }$, we get
$$
(\widetilde \psi_{n_k})_q = \sum_{j=1}^{N_{n_k}} c_{n_k,j,q} e_{n_k,j,q}.
$$
Choosing $\phi_{n_k}^\infty$ such that 
$$
(\phi_{n_k}^\infty)_q = \sum_{j=1}^{N_{n_k}} c_{n_k,j,q} (1_{\epsilon_{n_k,j,q}-\lambda_{n_k} \ge 0}-1_{\epsilon_{n_k,j,q}-\lambda_{n_k} < 0}) e_{n_k,j,q},
$$
we obtain $\|\phi_{n_k}^\infty\|_{L^2} = \|\widetilde \psi_{n_k}\|_{L^2}$ and 
$$
(a^0-\lambda_{n_k})(\widetilde \psi_{n_k},\phi_{n_k}^\infty) 
=   \fint_{\Gamma^\ast} \sum_{j=1}^{N_{n_k}} 
|\epsilon_{n_k,j,q}-\lambda_{n_k}| \, |c_{n_k,j,q}|^2.
$$
It is easy to check that $\dps \liminf_{k \to \infty}\max_{j,q} |\epsilon_{n_k,j,q}-\lambda_{n_k}| = \zeta:=\mbox{\rm dist}(\lambda,\sigma(H^0_{\rm per})) > 0$. Hence, 
$$
\liminf_{k \to \infty}(a^0-\lambda_{n_k})(\widetilde \psi_{n_k},\phi_{n_k}^\infty) 
\ge \zeta \epsilon.
$$
Besides, 
$$
\|\phi_{n_k}^\infty\|_{L^2}=\|\widetilde \psi_{n_k}\|_{L^2} \quad \mbox{and} \quad  a^0(\phi_{n_k}^\infty,\phi_{n_k}^\infty)=a^0(\widetilde \psi_{n_k},\widetilde \psi_{n_k}),
$$
which implies that the sequence $(\phi_{n_k}^\infty)_{k \in \N}$ is bounded in $H^1(\R^d)$. Consequently,
$$
0 < \zeta \epsilon \le \liminf_{k \to \infty}(a^0-\lambda_{n_k})(\widetilde \psi_{n_k},\phi_{n_k}^\infty) \le \liminf_{k \to \infty} \eta_{n_k} \|\phi_{n_k}^\infty\|_{H^1} = 0.
$$
We reach a contradiction.
\end{proof}

\medskip

A careful look on the above proof shows that the assumptions in Proposition~\ref{prop:Galerkin} can be weakened: in particular, the mesh ${\mathcal T}_n$ can be refined in the regions where $|W|$ is large, and coarsened in the vicinity of the boundary of $\Omega_n$ (see~\cite{these} for a more precise statement).

\section{Supercell method}
\label{sec:supercell}

In solid state physics and materials science, the current state-of-the-art technique to compute the discrete eigenvalues of a perturbed periodic Schr\"odinger operator in spectral gaps is the supercell method. Let ${\mathcal R}$ be the periodic lattice of the host crystal and $\Gamma$ its Wigner-Seitz cell. In the case of a cubic lattice of paramater $b > 0$, we have ${\mathcal R} = b \Z^d$ and $\Gamma = (-b/2,b/2]^d$. The supercell method consists in solving the spectral problem
\begin{equation}\label{eq:discdef2}
\left\{
\begin{array}{l}
\mbox{find }(\psi_{L,N}, \lambda_{L,N})\in X_{L,N} \times \R \mbox{ such that}\\
\forall \phi_{L,N} \in X_{L,N}, \; a_L(\psi_{L,N}, \phi_{L,N}) = \lambda_{L,N} \langle   \psi_{L,N}, \phi_{L,N} \rangle_{L^2_{\rm per}(\Gamma_L)}, \\
\end{array}
\right .
\end{equation}
where $\Gamma_L = L \Gamma$ (with $L \in \N^\ast$) is the supercell, 
$$
L^2_{\rm per}(\Gamma_L) = \left\{ u_L \in L^2_{\rm loc}(\R^d) \; | \; u_L \mbox{ $L{\mathcal R}$-periodic} \right\},
$$ 
$$
a_L(u_L,v_L) = \int_{\Gamma_L} \nabla u_L \cdot \nabla v_L + \int_{\Gamma_L} (V_{\rm per}+W) u_Lv_L, \quad \langle u_L,v_L\rangle_{L^2_{\rm per}(\Gamma_L)} = \int_{\Gamma_L} u_L v_L,
$$
and $X_{L,N}$ is a finite dimensional subspace of 
$$
H^1_{\rm per}(\Gamma_L) = \left\{ u_L \in L^2_{\rm per}(\Gamma_L) \; | \; \nabla u_L \in \left( L^2_{\rm per}(\Gamma_L) \right)^d\right\}.
$$
We denote by $H_{L,N} = H_L|_{X_{L,N}}$, where $H_L$ is the unique self-adjoint operator on $L^2_{\rm per}(\Gamma_L)$ associated with the quadratic form $a_L$. It then holds that 
$D(H_L) = H^2_{\rm per}(\Gamma_L)$, 
$$
\forall \phi_L \in H^2_{\rm per}(\Gamma_L), \quad H_L\phi_L = -\Delta \phi_L + (V_{\rm per} + W_L)\phi_L,
$$
and
$$
\forall \phi_{L,N} \in X_{L,N}, \quad H_{L,N}\phi_{L,N} = -\Delta \phi_{L,N} + \Pi_{X_{L,N}}\left( (V_{\rm per} + W_L)\phi_{L,N}\right),
$$
where $W_L\in L^{\infty}_{\rm per}(\Gamma_L)$ denotes the $L\cR$-periodic extension of $W|_{\Gamma_L}$ and $\Pi_{X_{L,N}}$ is the orthogonal projector of $L^2_{\rm per}(\Gamma_L)$ 
on $X_{L,N}$ for the $L^2_{\rm per}(\Gamma_L)$ inner product.

Again for the sake of clarity, we restrict ourselves to cubic lattices (${\mathcal R} = b \Z^d$) and to the most popular discretization method for supercell model, namely the Fourier (also called planewave) method. We therefore consider approximation spaces of the form
$$
X_{L,N} = \left\{ \sum_{k \in 2\pi (bL)^{-1}\Z^d \, | \, |k| \le 2\pi (bL)^{-1}N} c_k e_{L,k} \; \big| \; \forall k, \, c_{-k}=c_k^\ast \right\},
$$
where $e_{L,k}(x) = |\Gamma_L|^{-1/2} e^{ik \cdot x}$.

From the classical Jackson inequality for Fourier truncation, we deduce by scaling the following property of the discretization spaces $X_{L,N}$: for all real numbers $r$ and $s$ such that $0\leq r \leq s$, there exists a constant 
$C>0$ such that for all $L\in\N^*$ and all $\phi_L\in H^s_{\rm per}(\Gamma_L)$, 
\begin{equation}\label{eq:regularity}
\|\phi_L - \Pi_{X_{L,N}}\phi_L\|_{H^r_{\rm per}(\Gamma_L)} \leq C \left(\frac{L}{N} \right)^{s-r} \|\phi_L\|_{H^s_{\rm per}(\Gamma_L)}.
\end{equation}

\medskip

\noindent
Our analysis of the supercell method requires some assumption on the potential $V_{\rm  per}$. We define the functional space ${\cal M}_{\rm per}(\Gamma)$ as
$$
{\cal M}_{\rm per}(\Gamma) =\left\{ v\in L^2_{\rm per}(\Gamma) \;  | \; \|v\|_{{\cal M}_{\rm per}(\Gamma)}:=\sup_{L\in \N^\ast} \sup_{w\in H^1_{\rm per}(\Gamma_L)\setminus\left\{0\right\}}\frac{\| vw\|_{L^2_{\rm per}(\Gamma_L)}}{\| w \|_{H^1_{\rm per}(\Gamma_L)}} < \infty\right\}.
$$
It is quite standard to prove that ${\cal M}_{\rm per}(\Gamma)$ is a normed space and that the space of the ${\mathcal R}$-periodic functions of class $C^\infty$ is dense in ${\cal M}_{\rm per}(\Gamma)$. We denote the ${\cal R}$-periodic Lorentz spaces~\cite{BL} by $L^{p,q}_{\rm per}(\Gamma)$.

\medskip

\begin{proposition} The following embeddings are continuous:
\begin{eqnarray*}
&& {\rm for } \ d=1,\quad L^{2}_{\rm per}(\Gamma) \hookrightarrow {\cal M}_{\rm per}(\Gamma), \\
&& {\rm for } \ d=2,\quad L^{2,\infty}_{\rm per}(\Gamma) \hookrightarrow {\cal M}_{\rm per}(\Gamma), \\
&& {\rm for } \ d=3,\quad L^{3,\infty}_{\rm per}(\Gamma) \hookrightarrow {\cal M}_{\rm per}(\Gamma ).
\end{eqnarray*}
\end{proposition}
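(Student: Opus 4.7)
The plan is to reduce the $L$-uniform multiplier bound to a purely local estimate on $\Gamma$, using periodicity to beat the dependence in $L$. Fix $L\in\N^\ast$ and $w\in H^1_{\rm per}(\Gamma_L)$, and decompose $\Gamma_L$ into its $L^d$ translates $\Gamma_\tau:=\Gamma+\tau$, $\tau\in\mathcal{R}\cap\Gamma_L$. Because $v$ is $\mathcal{R}$-periodic, its Lorentz (or Lebesgue) norm on each $\Gamma_\tau$ equals its norm on $\Gamma$, so it suffices to prove a cell-wise estimate of the form
$$
\|vw\|_{L^2(\Gamma_\tau)}\;\le\;C\,\|v\|_{X_{\rm per}(\Gamma)}\,\|w\|_{H^1(\Gamma_\tau)}
$$
with $C$ independent of $\tau$ and $L$, where $X=L^{2}$, $L^{2,\infty}$ or $L^{3,\infty}$ according to $d=1,2,3$. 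Squaring and summing gives $\|vw\|_{L^2_{\rm per}(\Gamma_L)}^2\le C^2\|v\|_{X_{\rm per}(\Gamma)}^2\sum_{\tau}\|w\|_{H^1(\Gamma_\tau)}^2=C^2\|v\|_{X_{\rm per}(\Gamma)}^2\|w\|_{H^1_{\rm per}(\Gamma_L)}^2$, yielding the required bound on $\|v\|_{\mathcal{M}_{\rm per}(\Gamma)}$.

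The cell-wise estimate is then obtained by combining Hölder (in its Lorentz-Hölder/O'Neil form when needed) with a Sobolev-type embedding on the fixed bounded domain $\Gamma_\tau\simeq\Gamma$. In dimension $d=1$, one simply uses $\|vw\|_{L^2(\Gamma_\tau)}\le\|v\|_{L^2_{\rm per}(\Gamma)}\|w\|_{L^\infty(\Gamma_\tau)}$ together with the continuous embedding $H^1(\Gamma_\tau)\hookrightarrow L^\infty(\Gamma_\tau)$ available only in $1$D. In dimension $d=3$, the exponent balance $\tfrac13+\tfrac16=\tfrac12$ together with $0+\tfrac12=\tfrac12$ on the second Lorentz indices makes O'Neil's inequality $\|vw\|_{L^2}\le C\|v\|_{L^{3,\infty}}\|w\|_{L^{6,2}}$ applicable, and one concludes via the sharp Sobolev-Lorentz embedding $H^1(\Gamma_\tau)\hookrightarrow L^{6,2}(\Gamma_\tau)$ (which refines the classical $H^1\hookrightarrow L^6$).

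The case $d=2$ is the main obstacle: the natural Hölder pairing would require $w\in L^{\infty,2}$, but $H^1(\R^2)$ fails to embed into $L^\infty$. One therefore has to exploit more carefully the underlying assumption $v\in L^2_{\rm per}(\Gamma)$ built into the very definition of $\mathcal{M}_{\rm per}(\Gamma)$, combined with a Trudinger-type exponential integrability $e^{c|w|^2}\in L^1(\Gamma_\tau)$ for $w\in H^1(\Gamma_\tau)$, and the Young inequality in the $L\log L$/$\exp L$ duality. Equivalently, one can split $v=v_{>\lambda}+v_{\le\lambda}$ at a truncation level $\lambda$, use the $L^{2,\infty}$-control to bound $\||v_{>\lambda}|^2\|_{L\log L(\Gamma)}$ up to a logarithmic factor and dualize it against the Trudinger estimate for $|w|^2$, while handling $v_{\le\lambda}\in L^\infty$ trivially. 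The key technical point is to control all logarithmic factors uniformly in $\tau$ and $L$ so that the cell-wise sum still produces a clean bound by $\|v\|_{L^{2,\infty}_{\rm per}(\Gamma)}\|w\|_{H^1_{\rm per}(\Gamma_L)}$.

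The last two embedding statements of the proposition (from $L^{2,\infty}$ in $d=2$ and $L^{3,\infty}$ in $d=3$) then follow from the standard continuous injections $L^{2,\infty}_{\rm per}(\Gamma)\hookrightarrow L^{2,\infty}_{\rm per}(\Gamma)$ and $L^{3,\infty}_{\rm per}(\Gamma)\hookrightarrow L^{3,\infty}_{\rm per}(\Gamma)$ together with the previous cell-wise analysis, and the continuity of all injections follows from the explicit constants produced along the way. I expect the hard part to be pinning down the $d=2$ argument with $L$-independent constants; the $d=1$ and $d=3$ cases should be essentially bookkeeping once O'Neil's inequality and the sharp Sobolev-Lorentz embeddings are invoked.
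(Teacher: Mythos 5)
Your reduction to a cell-wise estimate via the periodic tiling of $\Gamma_L$ by translates of $\Gamma$, followed for $d=3$ by the Lorentz--H\"older (O'Neil) bound $\|vw\|_{L^2}\le C\|v\|_{L^{3,\infty}}\|w\|_{L^{6,2}}$ and the sharp Sobolev--Lorentz embedding $H^1(\Gamma)\hookrightarrow L^{6,2}(\Gamma)$, is exactly the paper's proof (which cites Avci--G\"urkanli for the multiplier bound and Alvino--Lions--Trombetti for the embedding, and only writes out $d=3$). The $d=1$ case via $H^1\hookrightarrow L^\infty$ is also fine. Two small writing points: your final paragraph invokes the ``injections'' $L^{2,\infty}_{\rm per}(\Gamma)\hookrightarrow L^{2,\infty}_{\rm per}(\Gamma)$ and $L^{3,\infty}_{\rm per}(\Gamma)\hookrightarrow L^{3,\infty}_{\rm per}(\Gamma)$, which are identity maps and say nothing, so that paragraph should be deleted or rewritten; and continuity of the embedding is immediate once the cell-wise constant is uniform, since $\|v\|_{{\cal M}_{\rm per}(\Gamma)}\le C\|v\|_{X_{\rm per}(\Gamma)}$ is itself the operator-norm bound.

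The genuine gap is $d=2$, and it is worse than ``hard bookkeeping with logarithms'': the cell-wise estimate you are aiming for is false, so no version of your Trudinger/$L\log L$ plan can close. Concretely, on $\Gamma\subset\R^2$ take $v(x)=|x|^{-1}\bigl(\log(e/|x|)\bigr)^{-3/4}$ near the origin (smooth elsewhere, extended $\cR$-periodically) and $w(x)=\chi(x)\bigl(\log(e/|x|)\bigr)^{1/4}$ with $\chi$ a smooth cutoff. Then $v\in L^2_{\rm per}(\Gamma)\cap L^{2,\infty}_{\rm per}(\Gamma)$ (it is dominated by $|x|^{-1}\in L^{2,\infty}$ and $\int r^{-1}(\log(e/r))^{-3/2}\,dr<\infty$), $w\in H^1_{\rm per}(\Gamma)$ (since $\int r^{-1}(\log(e/r))^{-3/2}\,dr<\infty$), yet $\|vw\|_{L^2(\Gamma)}^2\sim\int_0 r^{-1}(\log(e/r))^{-1}\,dr=\infty$. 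This reflects the failure of the Hardy inequality in dimension $2$ and the failure of $H^1(\R^2)\hookrightarrow L^\infty$; your own diagnosis of the obstruction is correct, but the proposed repair cannot work because $v\in L^{2,\infty}$ gives no control whatsoever on $\||v_{>\lambda}|^2\|_{L\log L}$ (indeed $\int_{\{|v|>\lambda\}}|v|^2$ need not even be finite for $v\in L^{2,\infty}$, and adding $v\in L^2$ does not help, as the example shows). You should flag this rather than promise a uniform-in-$L$ logarithmic bookkeeping that does not exist: as stated, the $d=2$ embedding requires either a stronger hypothesis (e.g.\ $L^p_{\rm per}(\Gamma)$ for some $p>2$, for which plain H\"older plus $H^1\hookrightarrow L^{2p/(p-2)}$ works) or a reinterpretation of the Lorentz exponent. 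Note that the paper itself only proves $d=3$ and dismisses the other two cases as ``similar arguments,'' so this issue is not addressed there either.
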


\medskip

\begin{proof} We only prove the result for $d=3$; the other two embeddings are obtained by similar arguments. Let us first recall that the Lorentz space  $L^{3,\infty}(\Gamma)$ is a $L^2$-multiplier of $L^{6,2}(\Gamma)$ (this can be seen by combining results on convolution multiplier spaces \cite{Avci} and continuity properties of the Fourier transform on Lorentz spaces \cite{BL}), in the sense that
\begin{equation*}
\exists C_1\in\R_+ \; | \;   \forall f\in L^{3,\infty}(\Gamma),\; \forall g\in L^{6,2}(\Gamma),\; \| fg\|_{L^2(\Gamma)}\le C_1 \| f\|_{L^{3,\infty}(\Gamma)} \| g\| _{L^{6,2}(\Gamma)}.
\end{equation*}
Besides, the embedding of $H^1(\Gamma)$ into $L^{6,2}(\Gamma)$ is continuous (see~\cite{Alvino} for instance)
\begin{equation}
\exists C_2\in\R_+ \; | \;  \forall g \in H^1(\Gamma), \; \| g \|_{L^{6,2}(\Gamma)} \le C_2  \| g \|_{H^{1}(\Gamma)}.
\end{equation}
Let $v\in L^{3,\infty}_{\rm per}(\Gamma)$. Denoting by $\cI_L := {\cal R} \cap (-Lb/2,Lb/2]^3$, we have, for all $w\in H^1_{\rm per}(\Gamma_L)$,
\begin{eqnarray*}
\| vw\|^2_{L^2_{\rm per}(\Gamma_L)}&=&\int_{\Gamma_L}|v w|^2  =  \sum_{R\in\cI_L} \int_{\Gamma+R}|v(x) w(x)|^2 \, dx \\
&=&   \sum_{R\in\cI_L} \int_{\Gamma} |v(x) w(x+R)|^2 \, dx =  \sum_{R\in\cI_L} \|v  w(.+R)\|_{L^2(\Gamma)}^2
\\
& \leq &  C_1^2 \sum_{R\in\cI_L} \|v\|^2_{L^{3,\infty}(\Gamma)} \|w(.+R)\|_{L^{6,2}(\Gamma)}^2\\
&\leq & C_1^2  \|v\|^2_{L^{3,\infty}(\Gamma)} \sum_{R\in\cI_L} \|w(.+R)\|_{L^{6,2}(\Gamma)}^2\\
&\leq & C_1^2C_2^2  \|v\|^2_{L^{3,\infty}(\Gamma)} \sum_{R\in\cI_L} \|w(.+R)\|_{H^{1}(\Gamma)}^2 \\
&\leq &  C_1^2C_2^2  \|v\|^2_{L^{3,\infty}(\Gamma)} 
\sum_{R\in\cI_L} \int_{\Gamma} \left(|w(x+R)|^2 + |\nabla w(x+R)|^2\right)\,dx \\
&\leq &  C_1^2C_2^2  \|v\|^2_{L^{3,\infty}(\Gamma)}   \int_{\Gamma_L} \left(|w(x)|^2 + |\nabla w(x)|^2\right)\,dx \\
&\leq &  C_1^2C_2^2  \|v\|^2_{L^{3,\infty}(\Gamma)}   \|w\|^2_{H^1_{\rm per}(\Gamma_L)}.
\end{eqnarray*}
Therefore, $v \in {\cal M}_{\rm per}(\Gamma)$ and $\|v\|_{{\cal M}_{\rm per}(\Gamma)} \le C_1C_2 \|v\|_{L^{3,\infty}(\Gamma)}$.
\end{proof}

\medskip

\begin{remark} In dimension 3, the ${\mathcal R}$-periodic Coulomb kernel $G_1$ defined by
$$
-\Delta G_1 = 4\pi \left( \sum_{R \in {\cal R}} \delta_R - |\Gamma|^{-1} \right), \quad \min_{x \in \R^3} G_1(x)=0,
$$
is in $L^{3,\infty}_{\rm per}(\Gamma)$, hence in ${\cal M}_{\rm per}(\Gamma)$. The functional setting we have introduced therefore allows us to deal with the electronic structure of crystals containing point-like nuclei.
\end{remark}

\medskip

\begin{theorem} \label{Th:supercell} Assume that  $V_{\rm per} \in {\cal M}_{\rm per}(\Gamma)$. Then
$$
\lim_{N,L \to \infty \, | \, N/L \to \infty} \sigma(H_{L,N}) = \sigma(H).
$$
\end{theorem}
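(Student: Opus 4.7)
The plan is to establish the two inclusions $\sigma(H)\subset\liminf\sigma(H_{L,N})$ (no missed spectrum) and $\limsup\sigma(H_{L,N})\subset\sigma(H)$ (no pollution) separately. The hypothesis $V_{\rm per}\in\mathcal{M}_{\rm per}(\Gamma)$ will be used in both directions to ensure that $a_L$ is a bounded symmetric bilinear form on $H^1_{\rm per}(\Gamma_L)\times H^1_{\rm per}(\Gamma_L)$ with a coercivity estimate analogous to (\ref{eq:lower_bound_a}), namely $a_L(\psi,\psi)\ge\frac{1}{2}\|\nabla\psi\|^2_{L^2_{\rm per}(\Gamma_L)}-C\|\psi\|^2_{L^2_{\rm per}(\Gamma_L)}$ with a constant $C$ independent of $L$.

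For the first inclusion I would take $\lambda\in\sigma(H)$ and construct a compactly supported Weyl sequence $(\phi_n)\subset H^2(\R^d)$ with $\|\phi_n\|_{L^2}=1$ and $\|(H-\lambda)\phi_n\|_{L^2}\to 0$. Once $L$ is large enough that $\mbox{supp}(\phi_n)\subset\Gamma_L$, its periodic extension belongs to $H^2_{\rm per}(\Gamma_L)$ and satisfies $(H_L-\lambda)\phi_n=(H-\lambda)\phi_n$ there (since $W_L = W$ on $\Gamma_L$), so $\phi_n$ is an almost-eigenvector of $H_L$ and $\mbox{dist}(\lambda,\sigma(H_L))\to 0$ by Rayleigh-Ritz. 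The Galerkin step is then routine: the Jackson estimate (\ref{eq:regularity}) controls $\|\phi_n-\Pi_{X_{L,N}}\phi_n\|_{H^1_{\rm per}(\Gamma_L)}$ once $N/L$ is large, yielding an eigenvalue of $H_{L,N}$ converging to $\lambda$.

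For the second inclusion I would argue by contradiction. Assume $\lambda_k\in\sigma(H_{L_k,N_k})$ with $\lambda_k\to\lambda\notin\sigma(H)$ and let $\psi_k$ be a corresponding $L^2_{\rm per}(\Gamma_{L_k})$-normalized eigenvector. Coercivity gives a uniform $H^1_{\rm per}(\Gamma_{L_k})$ bound on $\psi_k$, and since $\sigma(H^0_{\rm per})\subset\sigma(H)$ we have $\zeta:=\mbox{dist}(\lambda,\sigma(H^0_{\rm per}))>0$. Transposing the Bloch-based argument from the proof of Proposition~\ref{prop:Galerkin}, I would decompose $X_{L_k,N_k}$ into a direct sum along the discrete set of Bloch momenta compatible with $L_k\mathcal{R}$-periodicity, diagonalize the Bloch form $a^0_q$ on each summand, and construct $\phi_k\in X_{L_k,N_k}$ with $\|\phi_k\|_{L^2_{\rm per}(\Gamma_{L_k})}=1$ and $\phi_k$ bounded in $H^1$ by flipping signs of Fourier components so as to align with $a^0_{L_k}-\lambda_k$. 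Using the spectral gap of $H^0_{\rm per}$ together with the assumption $N_k/L_k\to\infty$ (which makes the Galerkin approximation at each fixed momentum arbitrarily accurate), this would yield
$$
\liminf_{k\to\infty}(a^0_{L_k}-\lambda_k)(\psi_k,\phi_k)\ge\zeta,
$$
while the Galerkin eigenvalue equation rewrites $(a^0_{L_k}-\lambda_k)(\psi_k,\phi_k)=-\int_{\Gamma_{L_k}}W_{L_k}\psi_k\phi_k$.

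The main obstacle will be showing that this last integral tends to zero, exploiting the decay $W(x)\to 0$ as $|x|\to\infty$ on a torus that is itself growing. My plan is to apply a concentration-compactness dichotomy to the densities $|\psi_k|^2$ viewed on $\R^d$ via the identification of $\Gamma_{L_k}$ with a fundamental domain of $L_k\mathcal{R}$. In the vanishing case, $\psi_k\to 0$ in $L^2_{\rm loc}(\R^d)$, and splitting the integral into a ball $B_R$ (where $|W|$ may be large but $\|\psi_k\|_{L^2(B_R)}\to 0$) and its complement (where $|W|\le\epsilon$) makes the bound vanish. In the concentration case, I would extract translations $R_k\in\mathcal{R}$ and a nonzero weak $H^1(\R^d)$-limit $\phi$ of $\psi_k(\cdot+R_k)$; if $|R_k|$ stays bounded, passing to the limit in the eigenvalue equation gives $H\phi=\lambda\phi$, contradicting $\lambda\notin\sigma(H)$, while if $|R_k|\to\infty$ then $W(\cdot+R_k)\to 0$ locally, so $\phi\in H^1(\R^d)$ satisfies $H^0_{\rm per}\phi=\lambda\phi$, contradicting the absolute continuity of $\sigma(H^0_{\rm per})$.
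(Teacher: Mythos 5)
Your first inclusion follows the paper's proof essentially verbatim (compactly supported quasi-modes, $L\mathcal R$-periodization, planewave projection controlled in the $H^1_{\rm per}(\Gamma_L)$ norm by the Jackson estimate (\ref{eq:regularity}) so that the ${\cal M}_{\rm per}(\Gamma)$-bound on $V_{\rm per}$ applies), so the interesting divergence is in the no-pollution direction. There the paper cuts $\psi_{L_k,N_k}$ off to $\widetilde\psi_k=\chi_k\psi_{L_k,N_k}$ on $\R^d$ and shows $(-\Delta+V_{\rm per}-\lambda)\widetilde\psi_k\to0$ in $L^2(\R^d)$, producing a Weyl sequence for $H^0_{\rm per}$ at $\lambda\notin\sigma(H^0_{\rm per})$; this forces an estimate of the full residual $\|V_{\rm per}\psi_{L_k,N_k}-\Pi_{X_{L_k,N_k}}(V_{\rm per}\psi_{L_k,N_k})\|_{L^2_{\rm per}(\Gamma_{L_k})}$ (density of $W^{1,\infty}_{\rm per}$ in ${\cal M}_{\rm per}(\Gamma)$ plus $N_k/L_k\to\infty$) together with commutator terms in $\chi_k$. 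You stay on the torus instead: since the planewave space is stable under ${\cal R}$-translations, $a^0_{L_k}$ block-diagonalizes over the discrete Bloch momenta, the sign-flipped test function gives the exact identity $(a^0_{L_k}-\lambda_k)(\psi_k,\phi_k)=\sum_{q,j}|\epsilon^{(k)}_{j,q}-\lambda_k|\,|c_{j,q}|^2\ge(\zeta-o(1))\|\psi_k\|^2_{L^2_{\rm per}(\Gamma_{L_k})}$, and the Galerkin relation turns this into $-\int_{\Gamma_{L_k}}W_{L_k}\psi_k\phi_k$ with no discretization residual at all. What this buys is that $N_k/L_k\to\infty$ is only needed through the uniform-in-$q$ convergence of the first bands in the discrete fibers (an a priori statement about the approximation spaces, not about $\psi_k$); what it costs is two checks you should make explicit: the sign flip must respect the reality constraint $c_{-k}=c_k^\ast$ (pair the fibers at $q$ and $-q$), and the band convergence must be uniform in $q$ (it is, because the first $J$ Bloch eigenfunctions are uniformly bounded in $H^2$ when $V_{\rm per}\in{\cal M}_{\rm per}(\Gamma)$). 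Note also that the bound you need is on $\min_{j,q}|\epsilon^{(k)}_{j,q}-\lambda_k|$, not the $\max$.

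The one step you gloss over, and cannot avoid, is the limit passage in the eigenvalue equation in your non-vanishing branch. The equation satisfied by $\psi_k$ is the Galerkin one, $-\Delta\psi_k+\Pi_{X_{L_k,N_k}}\bigl((V_{\rm per}+W_{L_k})\psi_k\bigr)=\lambda_k\psi_k$, and the projector is non-local: to test against $\varphi\in C^\infty_{\rm c}(\R^d)$ you must insert $\Pi_{X_{L_k,N_k}}\varphi_{L_k}$ and control $\|V_{\rm per}(\varphi_{L_k}-\Pi_{X_{L_k,N_k}}\varphi_{L_k})\|_{L^2_{\rm per}(\Gamma_{L_k})}\le\|V_{\rm per}\|_{{\cal M}_{\rm per}(\Gamma)}\|\varphi_{L_k}-\Pi_{X_{L_k,N_k}}\varphi_{L_k}\|_{H^1_{\rm per}(\Gamma_{L_k})}$, which is exactly where (\ref{eq:regularity}) and $N_k/L_k\to\infty$ re-enter; the paper needs the same ingredient, hidden in ``reasoning as in the proof of Proposition~\ref{prop:Galerkin}''. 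Finally, your concentration-compactness apparatus is heavier than necessary: only the mass of $\psi_k$ on fixed balls around the origin obstructs $\int_{\Gamma_{L_k}}W_{L_k}\psi_k\phi_k\to0$, so the crude dichotomy ``$\psi_k\to0$ in $L^2_{\rm loc}(\R^d)$ or not'' suffices, the non-vanishing case already yields a nonzero weak limit without translating, and since $\lambda\notin\sigma(H)$ forces $\lambda\notin\sigma(H^0_{\rm per})=\sigma_{\rm ess}(H)$, the case $|R_k|\to\infty$ and the appeal to absolute continuity of $\sigma(H^0_{\rm per})$ are not needed.
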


\medskip

\begin{proof} Let us first establish that
$$
\sigma(H) \subset \liminf_{N,L \to \infty \, | \, N/L \to \infty} \sigma(H_{L,N}).
$$
Let $\lambda \in \sigma(H)$ and $(N_L)_{L\in\N^*}$ be a sequence of integers such that $\dps \frac{N_L}{L} \mathop{\longrightarrow}_{L\to \infty} \infty$. Let $\epsilon > 0$ and $\psi \in C^\infty_{\rm c}(\R^d)$ be such that $\|\psi\|_{L^2}=1$ and $\|(H-\lambda)\psi\|_{L^2} \le \epsilon$. We denote by $\psi_L$ the $L{\cal R}$-periodic extension of $\psi|_{\Gamma_L}$. Since $\psi$ is compactly supported, there exists $L_0 \in \N^\ast$ such that for all $L \ge L_0$, $\mbox{Supp}(\psi) \subset \Gamma_L$. Consequently, for all $L \ge L_0$, 
$$
\|\psi_L\|_{L^2_{\rm per}(\Gamma_L)}=1 \quad \mbox{and} \quad 
\|(H_L-\lambda)\psi_L\|_{L^2_{\rm per}(\Gamma_L)} \le \epsilon.
$$
Let $\psi_{L,N_L}:= \Pi_{X_{L,N_L}}\psi_L$. We are going to prove that 
\begin{equation}\label{eq:convHL}
\left\|(H_L - \lambda)\psi_L - \left( H_{L,N_L} - \lambda\right) \psi_{L,N_L} \right\|_{L^2_{\rm per}(\Gamma_L)} \mathop{\longrightarrow}_{L\to\infty} 0.
\end{equation}
First, we infer from (\ref{eq:regularity}) and the density of $H^1_0(\Omega)$ in $L^2(\Omega)$ for any bounded domain $\Omega$ of $\R^d$, that
\begin{eqnarray*}
\forall \phi \in L^2_{\rm c}(\R^d), \quad \|(1-\Pi_{X_{L,N_L}})\phi_L\|_{L^2_{\rm per}(\Gamma_L)} \mathop{\longrightarrow}_{L \to \infty} 0,
\end{eqnarray*}
where $L^2_{\rm c}(\R^d)$ denotes the space of the square integrable functions on $\R^d$ with compact supports, and where $\phi_L$ is the $L{\cal R}$-periodic extension of $\phi|_{\Gamma_L}$. As $\psi$, $\Delta\psi$, $V_{\rm per}\psi$ and $W\psi$ are square integrable, with compact supports, we therefore have for all $L \ge L_0$,
\begin{eqnarray*}
&&\|\psi_L - \psi_{L,N_L}\|_{L^2_{\rm per}(\Gamma_L)} =  \left\|\left(1 - \Pi_{X_{L,N_L}}\right)\psi_L\right\|_{L^2_{\rm per}(\Gamma_L)} \mathop{\longrightarrow}_{L\to\infty}  0, \\
&&\|-\Delta \psi_L +\Delta \psi_{L,N_L}\|_{L^2_{\rm per}(\Gamma_L)} =  \left\|\left(1 - \Pi_{X_{L,N_L}}\right)(-\Delta \psi)_L\right\|_{L^2_{\rm per}(\Gamma_L)} \mathop{\longrightarrow}_{L\to\infty}  0, \\
&&\|W_L\psi_L - \Pi_{X_{L,N_L}} (W_L \psi_L) \|_{L^2_{\rm per}(\Gamma_L)} =  \left\|\left( 1 -\Pi_{X_{L,N_L}} \right) (W\psi)_L \right\|_{L^2_{\rm per}(\Gamma_L)} \mathop{\longrightarrow}_{L\to\infty}  0, \\
&& \|V_{\rm per}\psi_L - \Pi_{X_{L,N_L}}(V_{\rm per}\psi_L) \|_{L^2_{\rm per}(\Gamma_L)} =  \left\|\left( 1 -\Pi_{X_{L,N_L}} \right) (V_{\rm per}\psi)_L \right\|_{L^2_{\rm per}(\Gamma_L)} \mathop{\longrightarrow}_{L\to\infty}  0.
\end{eqnarray*}
We infer from the last two convergence results that, on the one hand,
\begin{eqnarray*}
&&\|W_L\psi_L - \Pi_{X_{L,N_L}} (W_L \psi_{L,N_L}) \|_{L^2_{\rm per}(\Gamma_L)}\\
&& \quad \leq  \left\|W_L\psi_L - \Pi_{X_{L,N_L}} (W_L \psi_L) \right\|_{L^2_{\rm per}(\Gamma_L)} + \left\|\Pi_{X_{L,N_L}}\left(W_L(\psi_L - \psi_{L,N_L})\right) \right\|_{L^2_{\rm per}(\Gamma_L)}\\
&&  \quad \leq \left\|W_L\psi_L - \Pi_{X_{L,N_L}} (W_L \psi_L)\right\|_{L^2_{\rm per}(\Gamma_L)}+ \|W\|_{L^{\infty}}\left\|\psi_L - \psi_{L,N_L} \right\|_{L^2_{\rm per}(\Gamma_L)} \\
&& \qquad \mathop{\longrightarrow}_{L\to\infty}  0,
\end{eqnarray*}
and that, on the other hand,
\begin{eqnarray*}
&& \|V_{\rm per}\psi_L - \Pi_{X_{L,N_L}}(V_{\rm per}\psi_{L,N_L}) \|_{L^2_{\rm per}(\Gamma_L)} \\
&& \quad \leq  \left\|V_{\rm per}\psi_L - \Pi_{X_{L,N_L}}(V_{\rm per}\psi_{L}) \right\|_{L^2_{\rm per}(\Gamma_L)} + \left\|\Pi_{X_{L,N_L}}\left(V_{\rm per} (\psi_L - \psi_{L,N_L})\right)\right\|_{L^2_{\rm per}(\Gamma_L)}\\
&& \quad \leq  \left\| \left( 1 - \Pi_{X_{L,N_L}}\right) V_{\rm per}\psi_L \right\|_{L^2_{\rm per}(\Gamma_L)} + \|V_{\rm per}\|_{{\cal M}_{\rm per}(\Gamma)} \|\psi_L - \psi_{L,N_L} \|_{H^1_{\rm per}(\Gamma_L)} \\ 
&& \qquad \mathop{\longrightarrow}_{L\to\infty}  0.
\end{eqnarray*}
Collecting the above results, we obtain (\ref{eq:convHL}). Thus, for $L$ large enough,
$$
\|(H_{L,N_L}- \lambda)\psi_{L,N_L}\|_{L^2_{\rm per}(\Gamma_L)} \leq 2\varepsilon.
$$
As $\|\psi_{L,N_L}\|_{L^2_{\rm per}(\Gamma_L)}=1$ for all $L \ge L_0$, we infer that for $L$ large enough, $\mbox{dist}(\lambda,\sigma(H_{L,N_L}))\le 2\epsilon$, so that $\dps \lambda \in  \liminf_{L \to \infty} \sigma(H_{L,N_L})$.

\medskip

\noindent
Let us now prove that 
$$
\limsup_{N,L \to \infty \, | \, N/L \to \infty} \sigma(H_{N,L}) \subset \sigma(H).
$$
We argue by contradiction, assuming that there exists $\lambda \in \R \setminus \sigma(H)$ and a sequence $(L_k,N_k)_{k \in \N}$ with $\dps L_k \mathop{\rightarrow}_{k \to \infty} \infty$,  $\dps N_k \mathop{\rightarrow}_{k \to \infty} \infty$,  $\dps N_k/L_k \mathop{\rightarrow}_{k \to \infty} \infty$, such that for each $k$, there exists $(\psi_{L_k,N_k},\lambda_{L_k,N_k}) \in X_{L_k,N_k} \times \R$ satisfying
\begin{equation*}
\left\{
\begin{array}{l}
\forall \phi_{L_k,N_k} \in X_{L_k,N_k}, \; a_{L_k}(\psi_{L_k,N_k}, \phi_{L_k,N_k}) = \lambda_{L_k,N_k} \langle   \psi_{L_k,N_k}, \phi_{L_k,N_k} \rangle_{L^2_{\rm per}(\Gamma_{L_k})} \\
\|\psi_{L_k,N_k}\|_{L^2(\Gamma_{L_k})} = 1,
\end{array}
\right .
\end{equation*}
and $\dps \lim_{k \to \infty} \lambda_{L_k,N_k}=\lambda$. Each function $\psi_{L_k,N_k}$ is then solution to the PDE
\begin{equation}\label{eq:psikeq}
- \frac 12 \Delta \psi_{L_k,N_k} + \Pi_{X_{L_k,N_k}} \left((V_{\rm per}+W_{L_k}) \psi_{L_k,N_k}\right) = \lambda_{L_k,N_k} \psi_{L_k,N_k}.
\end{equation}
Reasoning as in the proof of Proposition~\ref{prop:Galerkin}, it can be checked that the sequence $(\|\psi_{L_k,N_k}\|_{H^1_{\rm per}(\Gamma_{L_k})})_{k \in \N}$ is bounded, and that
\begin{equation}\label{eq:psiLkNk}
\psi_{L_k,N_k} \mathop{\longrightarrow}_{k \to \infty} 0  \quad \mbox{in } L^2_{\rm loc}(\R^d).
\end{equation}

\medskip

\noindent
For all $k$, we consider a cut-off function $\chi_k \in C^\infty_{\rm c}(\R^d)$ such that $0 \le \chi_k \le 1$ on $\R^d$, $\chi_k \equiv 1$ on $\Gamma_{L_k}$, $\mbox{\rm Supp}(\chi_k)  \subset (L_k+L_k^{1/2})\Gamma$, $\|\nabla \chi_k\|_{L^\infty} \le C L_k^{-1/2}$, and $\|\Delta \chi_k\|_{L^\infty} \le C L_k^{-1}$ for some constant $C \in \R_+$ independent of $k$. We then set $\widetilde \psi_k = \chi_k \psi_{L_k,N_k}$. It holds $\widetilde \psi_k \in H^2(\R^d)$, $1 \le \|\widetilde \psi_k\|_{L^2} \le 2^{d/2}$ and 
\begin{eqnarray}
- \frac 12 \Delta \widetilde \psi_k + V_{\rm per} \widetilde \psi_k - \lambda \widetilde \psi_k &=& \chi_k \left( V_{\rm per} \psi_{L_k,N_k} - \Pi_{X_{L_k,N_k}}\left(V_{\rm per} \psi_{L_k,N_k}\right)\right) \nonumber \\ && 
- \chi_k \Pi_{X_{L_k,N_k}}\left(W_{L_k} \psi_{L_k,N_k}\right)   - \nabla\chi_k \cdot \nabla\psi_{L_k,N_k}  \nonumber \\ && - \frac 12 \Delta\chi_k \psi_{L_k,N_k} + (\lambda_{L_k,N_k}-\lambda)\widetilde \psi_k . \label{eq:tildepsi}
\end{eqnarray}
As $(\lambda_{L_k,N_k})_{k \in \N}$ converges to $\lambda$ in $\R$ and $\|\widetilde \psi_k\|_{L^2} \le 2^{d/2}$, we have
$$
(\lambda_{L_k,N_k}-\lambda) \widetilde \psi_k \mathop{\longrightarrow}_{k \to \infty} 0 \quad \mbox{strongly in } L^2(\R^d).
$$
Using the facts that $\mbox{Supp}(\chi_k) \subset 2\Gamma_{L_k}$, $\|\nabla \chi_k\|_{L^\infty} \le C L_k^{-1/2}$ and $\|\Delta \chi_k\|_{L^\infty} \le C L_k^{-1}$ for a constant $C \in \R_+$ independent of $k$, and the boundedness of the sequence $(\|\psi_{L_k,N_k}\|_{H^1_{\rm per}(\Gamma_{L_k})})_{k \in \N}$, we get
$$
- \nabla\chi_k \cdot \nabla\psi_{L_k,N_k} - \frac 12 \Delta\chi_k \psi_{L_k,N_k} \mathop{\longrightarrow}_{k \to \infty} 0 \quad \mbox{strongly in } L^2(\R^d).
$$
It also follows from (\ref{eq:psiLkNk}) that the sequence $\| W_{L_k} \psi_{L_k,N_k}\|_{L^2_{\rm per}(\Gamma_{L_k})} $ goes to zero, leading to
$$
\chi_{k}\Pi_{X_{L_k,N_k}}\left(W_{L_k} \psi_{L_k,N_k}\right)\mathop{\longrightarrow}_{k \to \infty} 0 \quad \mbox{strongly in } L^2(\R^d).
$$ 
Lastly, 
\begin{equation}\label{eq:conv}
\chi_k \left( V_{\rm per} \psi_{L_k,N_k} - \Pi_{X_{L_k,N_k}}\left(V_{\rm per} \psi_{L_k,N_k}\right)  \right) \mathop{\longrightarrow}_{k \to \infty} 0 \quad \mbox{strongly in } L^2(\R^d).
\end{equation}
To show the above convergence result, we consider $\epsilon > 0$ and, using the density of e.g.  $W^{1,\infty}_{\rm per}(\Gamma):= \{ W_{\rm per} \in L^\infty_{\rm per}(\Gamma) \; | \; \nabla W_{\rm per} \in L^{\infty}_{\rm per}(\Gamma)\}$ 
in ${\cal M}_{\rm per}(\Gamma)$, we can choose some $\Vt_{\rm per} \in W^{1,\infty}_{\rm per}(\Gamma)$ such that $\|V_{\rm per} - \Vt_{\rm per}\|_{{\cal M}_{\rm per}(\Gamma)} \leq \varepsilon$. 
We then deduce from (\ref{eq:regularity}) that, for all $k\in\N$,
\begin{eqnarray*}
&&\left\|V_{\rm per} \psi_{L_k,N_k} - \Pi_{X_{L_k,N_k}}\left(V_{\rm per} \psi_{L_k,N_k}\right)\right\|_{L^2_{\rm per}(\Gamma_{L_k})} \\
&& \leq \left\|(V_{\rm per} - \Vt_{\rm per}) \psi_{L_k,N_k}\right\|_{L^2_{\rm per}(\Gamma_{L_k})} +  \left\|\Vt_{\rm per} \psi_{L_k,N_k} - \Pi_{X_{L_k,N_k}}\left(\Vt_{\rm per} \psi_{L_k,N_k}\right)\right\|_{L^2_{\rm per}(\Gamma_{L_k})} \\
&& \leq  \|V_{\rm per} - \Vt_{\rm per}\|_{{\cal M}_{\rm per}(\Gamma)} \|\psi_{L_k,N_k}\|_{H^1_{\rm per}(\Gamma_{L_k})} + \frac{L_k}{N_k} \|\Vt_{\rm per} \psi_{L_k,n_k}\|_{H^1_{\rm per}(\Gamma_{L_k})}\\
&& \leq  \varepsilon \|\psi_{L_k,N_k}\|_{H^1_{\rm per}(\Gamma_{L_k})} + \frac{L_k}{N_k} \|\psi_{L_k,n_k}\|_{H^1_{\rm per}(\Gamma_{L_k})}(\|\Vt_{\rm per}\|_{L^{\infty}} + \|\nabla \Vt_{\rm per}\|_{L^{\infty}}).\\
\end{eqnarray*}
Since the sequence $\left( \|\psi_{L_k,N_k}\|_{H^1_{\rm per}(\Gamma_{L_k})}\right)_{k\in\N^*}$ is bounded, this yields
$$
\left\|  V_{\rm per} \psi_{L_k,N_k} - \Pi_{X_{L_k,N_k}}\left(V_{\rm per} \psi_{L_k,N_k}\right) \right\|_{L^2_{\rm per}(\Gamma_{L_k})} \mathop{\longrightarrow}_{k\to\infty} 0,
$$
which implies (\ref{eq:conv}).

\medskip

\noindent
Collecting the above convergence results, we obtain that the right-hand side of (\ref{eq:tildepsi}) goes to zero strongly in $L^2(\R^d)$. Therefore, $(\widetilde \psi_k/\|\widetilde \psi_k\|_{L^2})_{k \in \N}$ is a Weyl sequence for $\lambda$, which contradicts the fact that $\lambda \notin \sigma(H^0_{\rm per})$.
\end{proof}

\medskip

\noindent
A similar result was proved in~\cite{Soussi} for compactly supported defects in 2D photonic crystals, with $V_{\rm per} \in L^\infty(\R^2)$ and $N=\infty$. In~\cite{CEM}, we prove that the error made on the eigenvalues and the associated eigenvectors decays exponentially with respect to the size of the supercell. We did not consider here the error due to numerical integration. The numerical analysis of the latter is ongoing work and will be reported in \cite{these}.

\medskip

Note that, if instead of supercells of the form $\Gamma_L = L \Gamma$, $L \in \N^\ast$, we had used computational domains of the form $\Gamma_{L+t} = (L+t) \Gamma$, $t \in (0,1)$, we would have observed spectral pollution. As in the case studied in the previous section, the spurious eigenvectors concentrate on the boundary $\partial \Gamma_{L+t}$. In the one-dimensional setting (${\mathcal R}=b\Z$), and for a fixed value of $t$, the translated spurious modes $\phi_{L,N}(\cdot-(L+t)b/2)$ strongly converge in $H^1_{\rm loc}(\R)$, when $L$ goes to infinity, to the normalized eigenmodes of the dislocation operator $H(t)=-\dps \frac{d^2}{dx^2} + 1_{x<0} V_{\rm per}(x+tb/2) +1_{x>0} V_{\rm per}(x-tb/2)$ studied in \cite{Korotyaev}. We refer to \cite{these} for further details.

\section{A no-pollution criterion}
\label{sec:projector}

Spectral pollution can be avoided by using e.g. the quadratic projection method, introduced in an abstract setting in \cite{QPM}, and applied to the case of perturbed periodic Schr\"odinger operators in~\cite{BoultonLevitin}. An alternative way to prevent spectral pollution is to impose constraints on the approximation spaces $(X_n)_{n \in \N}$. Consider a gap $(\alpha,\beta) \subset \R \setminus \sigma(H^0_{\rm per})$ in the spectrum of $H^0_{\rm per}$ and denote by $P=\chi_{(-\infty,\gamma]}(H^0_{\rm per})$ where $\gamma=\frac{\alpha+\beta}2$ and where $\chi_{(-\infty,\gamma]}$ is the characteristic function of the interval $(-\infty,\gamma]$. 

\medskip

\begin{theorem}\label{Th:Wannier} Let $(P_n)_{n \in \N}$ be a sequence of linear projectors on $L^2(\R^d)$ such that for all $n\in\N$, $\mbox{Ran}(P_n) \subset H^1(\R^d)$, and $\sup_{n \in \N}\|P_n\|_{{\mathcal L}(L^2)} < \infty$,
and $(X_n)_{n \in \N}$ a sequence of finite dimensional discretization spaces satisfying (\ref{eq:density}) as well as the following two properties:
\begin{description}
\item[(A1)] $\forall n \in \N, \; X_n = X_n^+ \oplus X_n^- \; \mbox{with} \; X_n^- \subset \mbox{\rm Ran}(P_n) \; \mbox{and} \; X_n^+ \subset \mbox{\rm Ran}(1-P_n)$;
\item [(A2)] $\dps \mathop{\sup}_{\phi_n \in X_n \setminus \left\{0\right\}} \frac{\|(P-P_n)\phi_n\|_{H^1(\R^d)}}{\|\phi_n\|_{H^1(\R^d)}} \mathop{\longrightarrow}_{n\to\infty} 0$.
\end{description}
Then,
$$
\lim_{n \to \infty} \sigma(H|_{X_n}) \cap (\alpha,\beta) = \sigma(H) \cap (\alpha,\beta).
$$
\end{theorem}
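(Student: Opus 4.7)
The plan is to prove the two inclusions separately. The direction $\sigma(H)\cap(\alpha,\beta) \subset \liminf_n \sigma(H|_{X_n})\cap(\alpha,\beta)$ is a direct consequence of (\ref{eq:no_lack}) and the density hypothesis (\ref{eq:density}). The substantive content is the no-pollution inclusion
$$
\limsup_{n\to\infty}\sigma(H|_{X_n})\cap(\alpha,\beta)\subset\sigma(H),
$$
which I would establish by contradiction, closely following the opening moves of the proof of Proposition~\ref{prop:Galerkin}.

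Assume $\lambda\in(\alpha,\beta)\setminus\sigma(H)$ together with eigenpairs $(\psi_{n_k},\lambda_{n_k})\in X_{n_k}\times\R$ satisfying $H|_{X_{n_k}}\psi_{n_k}=\lambda_{n_k}\psi_{n_k}$, $\|\psi_{n_k}\|_{L^2}=1$ and $\lambda_{n_k}\to\lambda$. The coercivity bound (\ref{eq:lower_bound_a}) gives a uniform $H^1$ bound on $\psi_{n_k}$; the density hypothesis (\ref{eq:density}) then forces any $H^1$-weak limit $\phi$ of the sequence to solve $H\phi=\lambda\phi$, hence to vanish, so $\psi_{n_k}\rightharpoonup 0$ in $H^1(\R^d)$ and $\psi_{n_k}\to 0$ strongly in $L^2_{\rm loc}(\R^d)$. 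Since $W\in L^\infty(\R^d)$ vanishes at infinity, multiplication by $W$ is compact from $H^1(\R^d)$ into $L^2(\R^d)$, so $\langle W\psi_{n_k},\psi_{n_k}\rangle\to 0$, and more generally all $W$-coupling cross terms appearing below are $o(1)$.

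Now decompose $\psi_{n_k}=\psi_{n_k}^++\psi_{n_k}^-$ per (A1). Since $P_{n_k}$ is idempotent with $X_{n_k}^-\subset\mbox{\rm Ran}(P_{n_k})$ and $X_{n_k}^+\subset\mbox{\rm Ran}(1-P_{n_k})$, one identifies $\psi_{n_k}^-=P_{n_k}\psi_{n_k}$ and $\psi_{n_k}^+=(1-P_{n_k})\psi_{n_k}$. Boundedness of $P$ on $H^1(\R^d)$ (as a spectral projector of $H^0_{\rm per}$) combined with (A2) yields $\|\psi_{n_k}^\pm\|_{H^1}\le C$, and (A2) applied in turn to $\psi_{n_k}^\pm$ gives that the residuals $r_{n_k}^-:=(1-P)\psi_{n_k}^-$ and $r_{n_k}^+:=P\psi_{n_k}^+$ tend to zero in $H^1(\R^d)$. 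Thus $\psi_{n_k}^--r_{n_k}^-\in\mbox{\rm Ran}(P)$ and $\psi_{n_k}^+-r_{n_k}^+\in\mbox{\rm Ran}(1-P)$. Testing the Galerkin relation against the admissible functions $\psi_{n_k}^\pm\in X_{n_k}$, using that $a^0$ kills cross terms between $\mbox{\rm Ran}(P)$ and $\mbox{\rm Ran}(1-P)$, and that $\langle\psi_{n_k}^+,\psi_{n_k}^-\rangle$ and all $W$-couplings are $o(1)$, I would derive
$$
a(\psi_{n_k}^\pm,\psi_{n_k}^\pm)=\lambda\,\|\psi_{n_k}^\pm\|_{L^2}^2+o(1).
$$
Comparing with the spectral-gap bounds $a^0(\psi_{n_k}^--r_{n_k}^-,\psi_{n_k}^--r_{n_k}^-)\le\alpha\|\psi_{n_k}^--r_{n_k}^-\|_{L^2}^2$ and $a^0(\psi_{n_k}^+-r_{n_k}^+,\psi_{n_k}^+-r_{n_k}^+)\ge\beta\|\psi_{n_k}^+-r_{n_k}^+\|_{L^2}^2$, and invoking $\lambda-\alpha>0$ and $\beta-\lambda>0$, one concludes $\|\psi_{n_k}^\pm\|_{L^2}\to 0$, which contradicts $1=\|\psi_{n_k}\|_{L^2}\le\|\psi_{n_k}^+\|_{L^2}+\|\psi_{n_k}^-\|_{L^2}$.

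The main technical obstacle is the bookkeeping around (A2): one must verify the $H^1$-boundedness of $\psi_{n_k}^\pm$ before (A2) can be applied to them, control the $a^0$-cross term $a^0(\psi_{n_k}^+,\psi_{n_k}^-)$ by decomposing it into exact cross terms between $\mbox{\rm Ran}(P)$ and $\mbox{\rm Ran}(1-P)$ (which vanish) plus $o(1)$ residuals, and transfer the spectral-gap inequalities from $\psi_{n_k}^\pm-r_{n_k}^\pm$ back to $\psi_{n_k}^\pm$ without losing the strict signs of $\lambda-\alpha$ and $\beta-\lambda$. Each step is routine given the $H^1$-compactness of $W$ and the $H^1$-smallness provided by (A2), but the chain of estimates must be arranged so that the $o(1)$ gains dominate the constant-size $L^2$ quantities.
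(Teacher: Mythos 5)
Your proposal is correct and follows essentially the same route as the paper's proof: contradiction via a normalized spurious sequence converging weakly to zero, the splitting $\psi_{n_k}=(1-P_{n_k})\psi_{n_k}+P_{n_k}\psi_{n_k}$, transfer from $P_{n_k}$ to the exact projector $P$ via (A2), and the spectral-gap inequalities for $a^0$ on $\mathrm{Ran}(P)$ and $\mathrm{Ran}(1-P)$. The only cosmetic difference is that you estimate the two components separately by showing each $a^0$-cross term is $o(1)$, whereas the paper subtracts the two tested identities so the cross terms cancel by symmetry of $a^0$; both are sound.
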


\medskip

The above result is an extension, for the specific case of perturbed periodic Schr\"odinger operators, to the results in~\cite[Theorem~2.6]{LewinSere} in the sense that (i) the exact spectral projector $P$ 
is replaced by an approximate projector $P_n$, and (ii) the discretization space $X_n$ may consist of functions of $H^1(\R^d)$ (the form domain of $H$), 
while in \cite{LewinSere}, the basis functions are assumed to belong to $H^2(\R^d)$ (the domain of $H$). 

\medskip

\begin{proof}
From (\ref{eq:no_lack}), we already know that $\sigma(H) \cap (\alpha,\beta) \subset \liminf_{n \to \infty} \sigma(H|_{X_n}) \cap (\alpha,\beta)$. Conversely, let $\lambda \in (\limsup_{n \to \infty} \sigma(H|_{X_n}) \cap (\alpha,\beta))\setminus \sigma(H)$, and $(\psi_{n_k})_{k\in\N}$ be a sequence of functions of $H^1(\R^d)$ such that for all $k\in\N$, $\psi_{n_k} \in X_{n_k}$, $\|\psi_{n_k}\|_{L^2(\R^d)} = 1$ and 
$(H|_{X_{n_k}} - \lambda)\psi_{n_k} \mathop{\longrightarrow}_{k\to\infty} 0$ strongly in $L^2(\R^d)$. Reasoning as in the proof of Proposition~\ref{prop:Galerkin}, we obtain that the sequence $(\psi_{n_k})_{k\in\N}$ converges to $0$,  weakly in $H^1(\R^d)$, and strongly in $L^2_{\rm loc}(\R^d)$. Let us then expand $\psi_{n_k}$ as $\psi_{n_k} = \psi_{n_k}^+ + \psi_{n_k}^-$ with $\psi_{n_k}^+:=(1-P_{n_k})\psi_{n_k} \in X_{n_k}^+$ and $\psi_{n_k}^- := P_{n_k}\psi_{n_k} \in X_{n_k}^-$ and notice that
$$
(a^0-\lambda)(\psi_{n_k}^+, \psi_{n_k}^+) + (a^0-\lambda)(\psi_{n_k}^-, \psi_{n_k}^+) = (a-\lambda)(\psi_{n_k}, \psi_{n_k}^+) - \int_{\R^d} W \psi_{n_k} \psi_{n_k}^+.
$$
Since $\psi_{n_k}^+=(1-P_n)\psi_{n_k} \in X_{n_k}$, 
\begin{eqnarray*}
\left|(a-\lambda)(\psi_{n_k}, \psi_{n_k}^+)\right| &=& \left|\langle (H|_{X_{n_k}}-\lambda)\psi_{n_k},(1-P_n)\psi_{n_k} \rangle_{L^2} \right| \\
&\le&  \left(1+\sup_{k \in \N}\|P_{n_k}\|_{{\mathcal L}(L^2)}\right)  \|(H|_{X_{n_k}}-\lambda)\psi_{n_k}\|_{L^2} \mathop{\longrightarrow}_{k \to \infty} 0.
\end{eqnarray*}
Besides, as $W$ vanishes at infinity, $(\psi_{n_k})_{k \in \N}$ converges to $0$ in $L^2_{\rm loc}(\R^d)$ and $\sup_{k \in \N} \|\psi_{n_k}^+\|_{L^2} \le 1+\sup_{k \in \N}\|P_{n_k}\|_{{\mathcal L}(L^2)} < \infty$, we also have
$$
\int_{\R^d} W \psi_{n_k} \psi_{n_k}^+ \mathop{\longrightarrow}_{k \to \infty} 0.
$$
Therefore,
$$
(a^0-\lambda)(\psi_{n_k}^+, \psi_{n_k}^+) + (a^0-\lambda)(\psi_{n_k}^-, \psi_{n_k}^+) \mathop{\longrightarrow}_{k\to\infty}  0.
$$
Likewise,
$$
(a^0-\lambda)(\psi_{n_k}^+, \psi_{n_k}^-) + (a^0-\lambda)(\psi_{n_k}^-, \psi_{n_k}^-)=(a-\lambda)(\psi_{n_k}, \psi_{n_k}^-) - \int_{\R^d} W \psi_{n_k} \psi_{n_k}^-  \dps \mathop{\longrightarrow}_{k\to\infty}  0.
$$
Substracting the second equation from the first one, we obtain
$$
(a^0-\lambda)(\psi_{n_k}^+, \psi_{n_k}^+) - (a^0-\lambda)(\psi_{n_k}^-, \psi_{n_k}^-) \mathop{\longrightarrow}_{k\to\infty} 0.
$$
Now, we notice that
\begin{eqnarray*}
(a^0-\lambda)(\psi_{n_k}^-, \psi_{n_k}^-) &= & (a^0-\lambda)(P_{n_k}\psi_{n_k}, P_{n_k} \psi_{n_k})\\
& = & (a^0-\lambda)(P\psi_{n_k}, P\psi_{n_k}) + 2(a^0-\lambda)(P\psi_{n_k}, (P_{n_k}-P)\psi_{n_k})\\
&& + (a^0-\lambda)((P_{n_k}-P)\psi_{n_k}, (P_{n_k}-P)\psi_{n_k}),
\end{eqnarray*}
and
\begin{eqnarray*}
(a^0-\lambda)(\psi_{n_k}^+, \psi_{n_k}^+) &= & (a^0-\lambda)((1-P_{n_k})\psi_{n_k}, (1-P_{n_k}) \psi_{n_k}) \\
& = & (a^0-\lambda)((1-P)\psi_{n_k}, (1-P)\psi_{n_k}) \\
&& + 2(a^0-\lambda)((1-P)\psi_{n_k}, (P-P_{n_k})\psi_{n_k})\\
&& + (a^0-\lambda)((P-P_{n_k})\psi_{n_k}, (P-P_{n_k})\psi_{n_k}).
\end{eqnarray*}
Besides, there exists $\eta_+, \eta_->0$ such that for all $\psi\in H^1(\R^d)$, 
\begin{eqnarray*}
(a^0-\lambda)((1-P)\psi, (1-P)\psi) &\geq & \eta_+ \|(1-P)\psi\|_{L^2(\R^d)}^2, \\
-(a^0-\lambda)(P\psi, P\psi) & \geq & \eta_- \|P\psi\|_{L^2(\R^d)}^2.
\end{eqnarray*}
Thus, 
\begin{eqnarray*}
(a^0-\lambda)(\psi_{n_k}^+, \psi_{n_k}^+) - (a^0-\lambda)(\psi_{n_k}^-, \psi_{n_k}^-) & \geq & \min(\eta_+, \eta_-) \|\psi_{n_k}\|_{L^2(\R^d)}^2 \\ &&
+ 2(a^0-\lambda)(\psi_{n_k}, (P-P_{n_k})\psi_{n_k}).
\end{eqnarray*}
From assumption $(A2)$ and the boundedness of $(\psi_{n_k})_{k \in \N}$ in $H^1(\R^d)$, we deduce that
$$
(a^0-\lambda)(\psi_{n_k}, (P-P_{n_k})\psi_{n_k}) \mathop{\longrightarrow}_{k\to\infty} 0,
$$
which imply that $\dps \|\psi_{n_k}\|_{L^2} \mathop{\longrightarrow}_{k\to\infty} 0$. This contradicts the fact that $\|\psi_{n_k}\|_{L^2}=1$ for all $k \in \N$. 
\end{proof}

\medskip

The assumptions made in Theorem~\ref{Th:Wannier} allow in particular to consider approximation spaces built from approximate spectral projectors of $H^0_{\rm per}$. As a matter of illustration, let us consider the case when the approximate spectral projectors are constructed by means of the finite element method. As in Section~\ref{sec:spectral_pollution}, we consider a sequence $({\cal T}_n^\infty)_{n \in \N}$ of uniformly regular meshes of $\R^d$, invariant with respect to the translations of the lattice ${\cal R}$, and such that $h_n:=\max_{K \in {\cal T}_n^\infty} \mbox{diam}(K) \mathop{\longrightarrow}_{n \to \infty}0$, and denote by $X_n^\infty$ the infinite dimensional closed vector subspace of $H^1(\R^d)$ built from $({\cal T}_n^\infty)_{n \in \N}$ and $\P_m$-finite elements. Assume that we want to compute the eigenvalues of $H=H^0_{\rm per}+W$ located inside the gap $(\alpha,\beta)$ between the $J^{\rm th}$ and $(J+1)^{\rm st}$ bands of $H^0_{\rm per}$. Using Bloch theory \cite{ReedSimon4}, we obtain
$$
P = \chi_{(-\infty,\gamma]}(H^0_{\rm per}) = \fint_{\Gamma^\ast} P_q \, dq,
$$
where $P_q$ is the rank-$J$ orthogonal projector on $L^2_q(\Gamma)$ defined by
$$
P_q = \sum_{j=1}^J |e_{j,q}\rangle \, \langle e_{j,q}|,
$$
where $(\epsilon_{j,q},e_{j,q})_{j \in \N^\ast}$, $\epsilon_{1,q} \le \epsilon_{2,q} \le \cdots$, is an $L^2_q(\Gamma)$-orthonormal basis of eigenmodes of the quadratic form $a^0_q$ defined by (\ref{eq:a0q}). For $n$ large enough, we introduce 
\begin{equation} \label{eq:defPn}
P_n := \fint_{\Gamma^\ast} \sum_{j=1}^J |e_{n,j,q}\rangle \, \langle e_{n,j,q}| \, dq,
\end{equation}
where $(\epsilon_{n,j,q},e_{n,j,q})_{1 \le j \le N_n}$, $\epsilon_{n,1,q} \le \epsilon_{n,2,q} \le \cdots \le \epsilon_{n,N_n,q}$, is the $L^2_q(\Gamma)$-orthonormal basis of eigenmodes of $a_q^0$ in $C^0({\mathcal T}_n^\infty) \cap L^2_q(\Gamma)$ already introduced in the proof of Proposition~\ref{prop:Galerkin}.

\medskip

\noindent
We have seen in Section~\ref{sec:spectral_pollution} that using approximation spaces of the form
$$
X_n = \left\{\psi_n \in X_n^\infty \; | \; \mbox{Supp}(\psi_n) \subset \Omega_n \right\},
$$
where $(\Omega_n)_{n \in \N}$ is an increasing sequence of closed convex sets of $\R^d$ converging to $\R^d$, leads, in general, to spectral pollution. We now consider the approximation spaces
\begin{equation} \label{eq:augmented}
\widetilde X_n = X_n^+ \oplus X_n^- \quad \mbox{where} \quad X_n^-=P_nX_n\quad \mbox{and} \quad X_n^+ = (1-P_n)X_n.
\end{equation}
Note that $\widetilde X_n= X_n + P_n X_n$, so that $\widetilde X_n$ can be seen as an augmentation of $X_n$.

\medskip

\begin{corollary} \label{Cor:Wan}
The sequence of approximation spaces $(\widetilde X_n)_{n \in \N}$ defined by (\ref{eq:augmented}) satisfies (\ref{eq:density}) and it holds
\begin{equation} \label{eq:wan}
\lim_{n \to \infty} \sigma(H|_{\widetilde X_n}) \cap (\alpha,\beta) = \sigma(H) \cap (\alpha,\beta).
\end{equation}
\end{corollary}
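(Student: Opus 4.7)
The plan is to verify the hypotheses of Theorem~\ref{Th:Wannier} for the sequence of projectors $(P_n)_{n \in \N}$ defined by (\ref{eq:defPn}) together with the augmented spaces $(\widetilde X_n)_{n \in \N}$, and then read off (\ref{eq:wan}) as a direct consequence.

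First, I would observe that each $P_n$ is an orthogonal projector on $L^2(\R^d)$: via the Bloch decomposition, $P_n$ is the direct integral over $\Gamma^\ast$ of the finite-rank orthogonal projectors $\sum_{j=1}^J |e_{n,j,q}\rangle\langle e_{n,j,q}|$, the $e_{n,j,q}$ being $L^2_q(\Gamma)$-orthonormal by construction. In particular $\sup_n \|P_n\|_{{\mathcal L}(L^2)} = 1$ and $\mbox{\rm Ran}(P_n) \subset X_n^\infty \subset H^1(\R^d)$. Hypothesis (A1) is then built into the definition (\ref{eq:augmented}): $X_n^- = P_n X_n \subset \mbox{\rm Ran}(P_n)$ and $X_n^+ = (1-P_n)X_n \subset \mbox{\rm Ran}(1-P_n)$, and the direct sum follows from $P_n^2=P_n$.

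For the density condition (\ref{eq:density}), I would note that any $\psi \in X_n$ can be split as $\psi = P_n\psi + (1-P_n)\psi \in X_n^- + X_n^+ = \widetilde X_n$, so $X_n \subset \widetilde X_n$. Since the original $\P_m$-finite element spaces $(X_n)_{n \in \N}$ already satisfy (\ref{eq:density})---because $h_n \to 0$ and $\Omega_n \nearrow \R^d$ allow us to approximate any $\phi \in H^1(\R^d)$ first by a compactly supported regularization and then by a finite element interpolant---the augmented sequence inherits this property.

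The main obstacle is condition (A2), namely
$$
\sup_{\phi_n \in \widetilde X_n \setminus \{0\}} \frac{\|(P-P_n)\phi_n\|_{H^1}}{\|\phi_n\|_{H^1}} \mathop{\longrightarrow}_{n \to \infty} 0.
$$
Since $\widetilde X_n \subset X_n^\infty$, it suffices to bound $\|(P-P_n)|_{X_n^\infty}\|_{{\mathcal L}(H^1)}$. The meshes $({\mathcal T}_n^\infty)_{n \in \N}$ being $\mathcal{R}$-periodic, the Bloch transform maps $X_n^\infty$ fiberwise onto $C^0({\mathcal T}_n^\infty) \cap L^2_q(\Gamma)$, and the corresponding fiberwise projector is exactly $P_{n,q} = \sum_{j=1}^J |e_{n,j,q}\rangle\langle e_{n,j,q}|$. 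The problem therefore reduces to a \emph{uniform-in-$q$} estimate of the fiberwise spectral projector error $P_q - P_{n,q}$ in the $H^1_q(\Gamma)$ operator norm. I would obtain this by writing both projectors as contour integrals of the resolvents of $a^0_q$ (respectively of its restriction to $C^0({\mathcal T}_n^\infty)\cap L^2_q(\Gamma)$) along a fixed contour $\mathcal{C}$ of $\C$ enclosing $\{\epsilon_{1,q},\ldots,\epsilon_{J,q}\}_{q \in \Gamma^\ast}$ and avoiding the bands above---such a contour exists because of the spectral gap $(\alpha,\beta)$---and then invoking Céa's lemma and a standard Aubin--Nitsche duality argument for the elliptic quadratic forms $a^0_q - z$. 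The key point is that the constants in these estimates can be taken independent of $q \in \Gamma^\ast$, because the coercivity constant, the $H^2_q(\Gamma)$-regularity of solutions and the distance from the contour to the spectrum are all controlled uniformly on the compact Brillouin zone. Integrating the resulting fiberwise bound over $\Gamma^\ast$ yields $\|(P-P_n)|_{X_n^\infty}\|_{{\mathcal L}(H^1)} = O(h_n^m) \to 0$, which gives (A2).

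Once (A1), (A2) and (\ref{eq:density}) are established, Theorem~\ref{Th:Wannier} applies verbatim and delivers (\ref{eq:wan}).
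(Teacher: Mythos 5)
Your proposal is correct and takes essentially the same approach as the paper: check that the $P_n$ are uniformly bounded orthogonal projectors with range in $X_n^\infty\subset H^1(\R^d)$, that (A1) and (\ref{eq:density}) hold by construction of $\widetilde X_n$, and that (A2) follows from a uniform-in-$q$ bound on the fiberwise spectral-projector error, then invoke Theorem~\ref{Th:Wannier}. The paper obtains that bound by citing the minmax principle and standard a priori estimates for linear elliptic eigenvalue problems where you sketch the underlying contour-integral/C\'ea/duality machinery; the only caveat is that your claimed rate $O(h_n^m)$ would require extra regularity of $V_{\rm per}$, whereas only convergence to zero is needed (and is all the paper asserts).
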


\medskip

\begin{proof} As $\widetilde X_n= X_n + P_n X_n$ with $(X_n)_{n \in \N}$ satisfying (\ref{eq:density}), it is clear that $(\widetilde X_n)_{n \in \N}$ satisfies (\ref{eq:density}). The sequence $(P_n)_{n \in \N}$ is a sequence of orthogonal projectors of $L^2(\R^d)$ such that $\mbox{Ran}(P_n) \subset X_n^\infty \subset H^1(\R^d)$. Besides, $\|P_n\|_{{\mathcal L}(L^2)}=1$ since the projector $P_n$ is orthogonal.  It follows from the minmax principle~\cite{ReedSimon4} and usual a priori error estimates for linear elliptic eigenvalue problems \cite{linear} that  
$$
\sup_{1 \le j \le J, \, q \in \Gamma^\ast} \epsilon_{n,j,q} \mathop{\longrightarrow}_{n \to \infty} \alpha \qquad \mbox{and} \qquad \inf_{j \ge J+1, \, q \in \Gamma^\ast} \epsilon_{n,j,q} \mathop{\longrightarrow}_{n \to \infty} \beta,
$$
and that there exists $C \in \R_+$ such that
$$
\| P_n-P \|_{{\mathcal L}(H^1)} \le C \, \sup_{q \in \Gamma^\ast} \sup_{\begin{array}{l} v_q \in {\rm Ran}(P_q) \\ \|v_q\|_{L^2_q(\Gamma)}=1 \end{array}} \inf_{v^n_q \in  C^0({\mathcal T}_n^\infty) \cap L^2_q(\Gamma)} \|v_q-v^n_q\|_{H^1_q(\Gamma)} \mathop{\longrightarrow}_{n \to \infty} 0.
$$
We conclude using Theorem~\ref{Th:Wannier}.
\end{proof}

\medskip

\noindent
Let us finally present some numerical simulations illustrating Corollary~\ref{Cor:Wan} in a one-dimensional setting, with $V_{\rm per}(x) = \cos(x) + 3\sin(2x+1)$ and $W(x) = -(x+2)^2 e^{-x^2}$. We focus on the spectral gap $(\alpha,\beta)$ located between the first and second bands of $H^0_{\rm per}=-\frac{d^2}{dx^2}+V_{\rm per}$ (corresponding to $J=1$). Numerical simulations done with the pollution-free supercell model show that $\alpha \simeq -1.15$ and $\beta \simeq -0.65$, and that $H$ has exactly two discrete eigenvalues $\lambda_1 \simeq -1.04$ and $\lambda_2 \simeq -0.66$ in the gap $(\alpha,\beta)$.

The simulations below have been performed with a uniform mesh of $\R$ centered on $0$, consisting of segments of length $h = \pi/50$, and with $\Omega = [-L,L]$, for different values of $L$. The sums over ${\mathcal R}$ have been truncated using very large cut-offs; likewise, the integrals on the Brillouin zone have been computed numerically on a very fine uniform integration grid, in order to eliminate the so-called $k$-point discretization errors. The numerical analysis of the approximations resulting from the truncation of the sums over ${\mathcal R}$ and from the numerical integration on $\Gamma^\ast$, is work in progress.

The spectra of the operators $H|_{X_n}$ (standard finite element discretization spaces) and $H|_{\widetilde X_n}$ (augmented finite element discretization spaces defined by (\ref{eq:augmented})) are displayed in Figure~\ref{fig:Wannier}. The variational approximation of $H$ in $X_n$ is seen to generate spectral pollution, while, in agreement with Corollary~\ref{Cor:Wan}, no spectral pollution is observed with the discretization spaces $\widetilde X_n$.

\medskip

\begin{figure}[h]
\centering
\label{fig:Wannier}
\begin{tabular}{c}
\includegraphics[height=5.5truecm]{./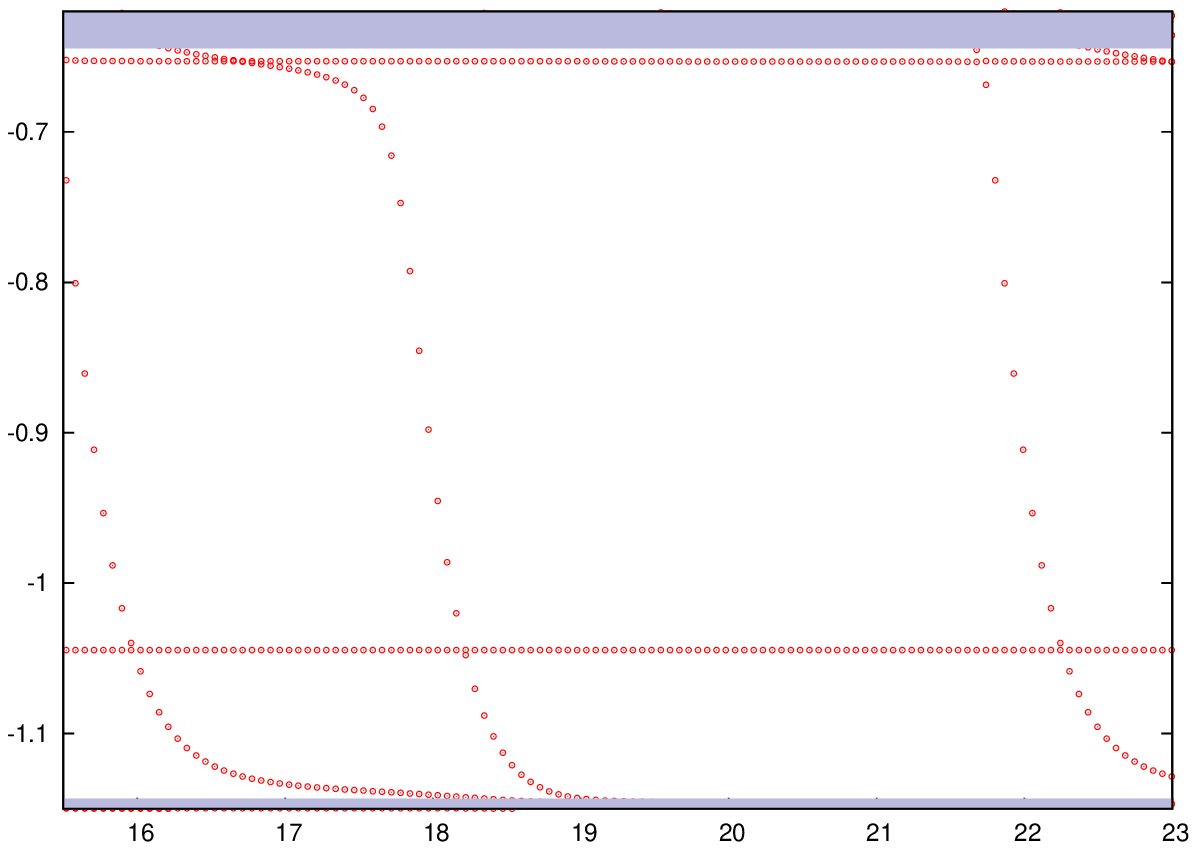} \\
\includegraphics[height=5.5truecm]{./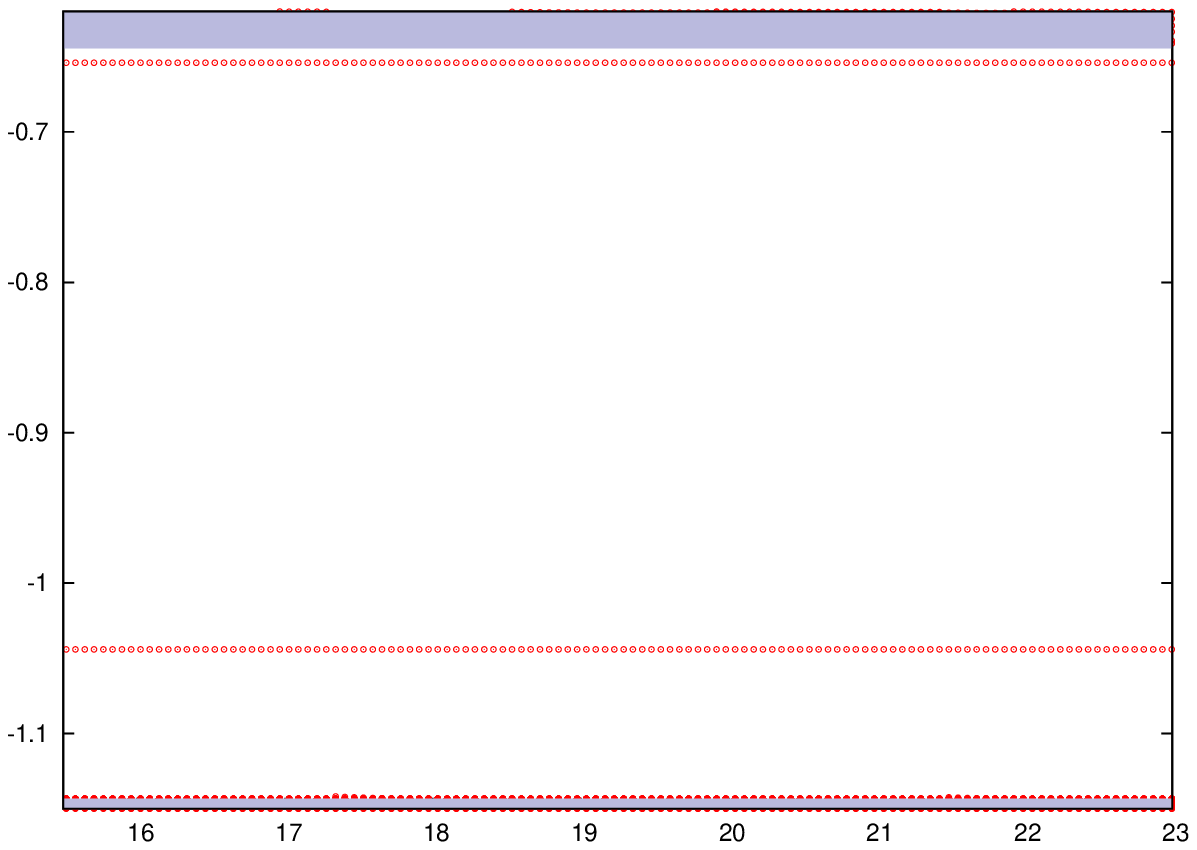}
\end{tabular}
\caption{The spectra of the variational approximations of $H$ for various sizes of the simulation domain, obtained with standard finite element discretization spaces $X_n$ (top) and with augmented finite element discretization spaces $\widetilde X_n$ defined by (\ref{eq:augmented}) (bottom).}
\end{figure}

\bigskip

\section*{Acknowledgements} We thank Fran\c cois Murat for helpful discussions.

\bigskip


\begin{thebibliography}{00}

\bibitem{Alvino} {\sc A. Alvino, G. Trombetti and P.L. Lions}, {\em On optimization problems with prescribed
rearrangements}, Nonlinear Analysis, T.M.A. 13 (1989), pp.~185--220.

\bibitem{Avci} {\sc H. Avci and A.T. G\"urkanli}, {\em Multipliers and tensor products of $L(p,q)$ Lorentz spaces}, Acta Mathematica Scientia, 27 (2007), pp.~107--116.

\bibitem{linear} {\sc I. Babu{\v{s}}ka and J. Osborn}, {\em Eigenvalue
   problems}, in Handbook of numerical analysis, Vol. II, P. Ciarlet and J.-L. Lions, eds., North-Holland (1991), pp.~641--787.

\bibitem{BL} {\sc J. Bergh and J. L\"ofstr\"om}, {\em Interpolation spaces}, Springer-Verlag (1976).

\bibitem{BoultonLevitin} {\sc L. Boulton and M. Levitin}, {\em On the approximation of the eigenvalues of perturbed periodic Schr\"odinger operators}, J. Phys. A, 40 (2007), pp.~9319--9329.

\bibitem{CDL} {\sc E. Canc\`es, A. Deleurence and M. Lewin}, {\em Non-perturbative embedding of local defects in crystalline materials}, J. Phys.: Condens. Matter, 20 (2008), 294213.

\bibitem{CEM} {\sc E. Canc\`es,V. Ehrlacher and Y. Maday}, {\em Variational approximations of eigenvalues in spectral gaps}, in preparation.

\bibitem{Chatelin} {\sc F. Chatelin}, {\em Spectral Approximation of Linear Operators}, Academic Press (1983). 

\bibitem{review_spectral_pollution} {\sc E.B. Davis and M. Plum}, {\em Spectral pollution}, IMA J. Numer. Anal., 24 (2004), pp.~417--438. 

\bibitem{these} {\sc V. Ehrlacher}, Ph.D. thesis, Universit\'e Paris Est, in preparation.

\bibitem{FreeFEM} FreeFem++ finite element software, http://www.freefem.org/

\bibitem{Korotyaev} {\sc E. Korotyaev}, {\em Lattice dislocations in a 1-dimensional model}, Commun. Math. Phys., 213 (2000), pp.~471--489.

\bibitem{Kuchment} {\sc P. Kuchment}, {\em Floquet theory for partial differential equations}, Birkhauser (1993).

\bibitem{LewinSere} {\sc M. Lewin and E. S\'er\'e}, {\em Spectral pollution and how to avoid it (with applications to Dirac and periodic Schr\"odinger operators)}, Proc. London Math. Soc., 100 (2010), pp.~864--900.

\bibitem{MantoiuPurice} {\sc M. Mantoiu and R. Purice}, {\em A priori decay for eigenfunctions of perturbed periodic Schr\"odinger operators}, Ann. Inst. Henri Poincar\'e, 2 (2001), pp.~525-551.

\bibitem{ReedSimon4} {\sc M. Reed and B. Simon}, {\em Methods of Modern Mathematical Physics IV: Analysis of Operators}, Academic Press, 1978.

\bibitem{QPM} {\sc E. Shargorodsky}, {\em Geometry of higher order relative spectra and projection methods}, J. Oper. Theor., 44 (2000),pp.  43--62.

\bibitem{Soussi} {\sc S. Soussi}, {\em Convergence of the supercell method for defect modes calculations in photonic crystals}, SIAM Journal of Numerical Analysis, 43 (2005), pp.~1175-1201.

\bibitem{half-line} {\sc V. Zheludev}, {\em The spectrum of Schr\"odinger operator, with a periodic potential, defined on the half line}, Works of Dept. of Math. Analysis of Kaliningrad State University (1969), pp. 18--37 (in Russian).

\bibitem{Yafaev} {\sc D. Yafaev}, {\em On the point spectrum in the quantum-mechanical many-body problem}, Math. USSR Izv., 40 (1976), pp. 861-896 (English translation).



\end{thebibliography}
\end{document}